  \providecommand\BibTeX{{%
    \normalfont B\kern-0.5em{\scshape i\kern-0.25em b}\kern-0.8em\TeX}}}
\newtheorem{theorem}{Theorem}
\newtheorem{lemma}{Lemma}
\newtheorem{definition}{Definition}
\newcommand{\myparagraph}[1] { \noindent {\textbf {#1}} }
\newcommand{\paragemph}[1] { \noindent {\emph{#1}} }
\newcommand{\papertitle}{Correlation Sketches for Approximate Join-Correlation Queries}
\newcommand{\nycdataset}{NYC Open Data\xspace}
\newcommand{\nycdatasetshort}{NYC\xspace}
\newcommand{\wbfdataset}{World Bank Finances\xspace}
\newcommand{\wbfdatasetshort}{WBF\xspace}
\newcommand{\sbndataset}{Synthetic Bivariate Normal\xspace}
\newcommand{\sbndatasetshort}{SBN\xspace}
\newcommand{\R}{\mathbb{R}}
\newcommand{\dataset}{dataset\xspace}
\newcommand{\Dataset}{Dataset\xspace}
\newcommand{\datasets}{datasets\xspace}
\newcommand{\Datasets}{Datasets\xspace}
\newcommand{\corrsketch}{\textit{Correlation Sketches}\xspace}
\newcommand{\corrsketchnormal}{Correlation Sketches\xspace}
\newcommand{\hide}[1]{}
\renewcommand{\paragraph}[1]{\vspace{.08cm} \noindent \textbf{#1.}}
\newlist{myitemize}{itemize}{3}
\setlist[myitemize]{leftmargin=4mm}
\setlist[myitemize,1]{label=\textbullet}
\def\withcolors{1}
\def\withnotes{1} % change to (0) to hide all comments, (1) to show all comments
\definecolor{Maroon}{rgb}{0.62, 0.0, 0.09}
  \newcommand{\jfcolor}[1]{{\color{red}#1}} % Juliana
  \newcommand{\cmcolor}[1]{{\color{blue}#1}} % Chris
  \newcommand{\ascolor}[1]{{\color{violet}#1}} % Aecio
  \newcommand{\abcolor}[1]{{\color{Maroon}#1}} % Aline
  \newcommand{\jfcolor}[1]{{#1}}
  \newcommand{\cmcolor}[1]{{#1}}
  \newcommand{\ascolor}[1]{{#1}}
  \newcommand{\abcolor}[1]{{#1}}
  \newcommand{\asnote}[1]{\ascolor{\textbf{AS: }\sf #1}}
  \newcommand{\as}[1]{\ascolor{\textbf{Aécio: }\sf #1}}
  \newcommand{\cm}[1]{\cmcolor{\textbf{Chris: }\sf #1}}
  \newcommand{\jf}[1]{\jfcolor{\textbf{Juliana: }\sf #1}}
  \newcommand{\ab}[1]{\abcolor{\textbf{Aline: }\sf #1}}
  \newcommand{\asnote}[1]{}
  \newcommand{\as}[1]{}
  \newcommand{\cm}[1]{}
  \newcommand{\jf}[1]{}
  \newcommand{\ab}[1]{}
\definecolor{HighlightColor}{rgb}{0.05,0.05,0.70}
\newcommand{\ignore}[1]{\leavevmode\unskip} % eat unnecessary spaces before
\begin{document}

\fancyhead{}
\title{\papertitle}

% Author list
\author{Aécio Santos$^1$, \ Aline Bessa$^1$, \ Fernando Chirigati$^2$, \ Christopher Musco$^1$, \ Juliana Freire$^1$}
\affiliation{
  \institution{$^1$New York University}
}
\email{{aecio.santos,aline.bessa,cmusco,juliana.freire}@nyu.edu}
\affiliation{
  \institution{$^2$Springer Nature}
}
\email{fernando.chirigati@us.nature.com}
% \authornote{Work done while at New York University.}
% \authornotemark[2]

\renewcommand{\shortauthors}{Aécio Santos, \ Aline Bessa, \ Fernando Chirigati, \ Christopher Musco, \ Juliana Freire}
\renewcommand{\authors}{Aécio Santos, \ Aline Bessa, \ Fernando Chirigati, \ Christopher Musco, \ Juliana Freire}

\begin{abstract}
  The increasing availability of structured datasets, from Web tables and open-data portals to enterprise data, opens up opportunities~to enrich analytics and improve machine learning models through relational data augmentation.
In this paper, we introduce a new class of data augmentation queries: \textit{join-correlation queries}.
Given a column $Q$ and a join column $K_Q$ from a query table $\mathcal{T}_Q$, retrieve tables $\mathcal{T}_X$ in a dataset collection such that $\mathcal{T}_X$ is joinable with $\mathcal{T}_Q$ on $K_Q$ and there is a column $C \in \mathcal{T}_X$ such that $Q$ is correlated with~$C$. 
A na\"ive approach to evaluate these queries, which first finds joinable tables and then explicitly joins and computes correlations between $Q$ and all columns of the discovered tables, is prohibitively expensive.
To efficiently support correlated column discovery, we
1)~propose a sketching method that enables the construction of an index for a large number of tables and that provides accurate estimates for join-correlation queries, and 2)~explore different scoring strategies that effectively rank the query results based on how well the columns are correlated with the query.
We carry out a detailed experimental evaluation, using both synthetic and real data, which shows that our sketches attain high accuracy and the scoring strategies lead to high-quality rankings.

\end{abstract}

\maketitle

\section{Introduction}
\label{sec:intro}

The increasing availability of structured \datasets,
from Web tables~\cite{cafarella@vldb2008,lehmberg@www2016} and open-data portals (see e.g.,~\cite{wbopendata,nycopendata,usopendata}) to enterprise data,
% to scientific data~\cite{}.
opens up opportunities to enrich analytics and improve
machine learning models through \emph{relational data augmentation}.
The challenge lies in finding relevant \datasets for a given task.
In this paper, we explore a new class of data search queries that retrieve related \datasets by uncovering correlations between numerical columns in unjoined \datasets.
Consider the following examples.

% https://www1.nyc.gov/content/visionzero/pages/vision-zero-task-force
\paragemph{Example 1: Explaining Traffic Fatalities.} Based on the observation that there is a relationship between high traffic speed and the number of fatalities, the City of New York, as part of the Vision Zero initiative~\cite{vision-zero}, reduced the speed limit on its streets from 30 to 25 miles per hour.
While the lower speed limit was partially effective, fatalities
were still high and there was a need to understand why. %, i.e., what might  cause traffic fatalities.
Analysts 
% who work in the initiative 
hypothesized
that an increased number of bikes and the %as well as the
weather can also lead to more traffic fatalities.
By examining
\datasets published in NYC Open Data~\cite{nycopendata}, they validated their hypotheses:
%discovered other factors that can
%contribute to and explain increases in traffic fatalities:
from
CitiBike data, they found that when there are more active bikes, there
are more accidents; and they also observed that high
precipitation is correlated with a larger number of accidents.

\paragemph{Example 2: Improving Taxi Demand Models.}  A data scientist at the NYC Taxi \&
Limousine Commission %, who is studying taxi demand, 
created a
model to predict taxi demand
using a \dataset that contains historical data about taxi trips and
their associated pick-up times and location zip codes.
To improve the model, % and reduce its root mean squared error (RMSE),
she had to find additional features that can influence taxi demand.
Domain experts suggested
a set of indicators for her to examine, including weather, major
events, and holidays. She obtained the relevant \datasets and observed  a substantial reduction in the root mean squared error after
augmenting
the training data.

In both examples, the analysts knew which \datasets they needed and where to find them.
Often, this is not the case: finding relevant data is a difficult and time-consuming task~\cite{fernandez@icde2018}.
On the Web, data are distributed over many sites and repositories, and in enterprises they are stored in a plethora of systems and databases.
As a point of reference, internal research at Lyft has found that their data scientists spend
25\% of their time on data discovery -- they only spend more time (40\%) on model development and deployment~\cite{lyft-amundsen}.

The interest in data search is growing in both industry and academia. In enterprises, there is an increasing number of ``data catalog'' systems~\cite{lyft-amundsen, linkedin-datahub,google-data-catalog, airbnb-dataportal} that support data search across multiple systems and data lakes. This market is projected to grow from USD 528.3 million in 2019 to USD 5.46 billion in 2027~\cite{data-catalog-market}.
\Dataset search engines that provide search over datasets on the Web~\cite{noy@www2019} are also becoming available.
While these systems fill important gaps in the data discovery space, they often have limited query capabilities,
% supporting only simple, keyword-based queries and faceted search over \dataset metadata.
supporting only queries over \dataset metadata.

Recent research proposes methods that support \dataset-oriented queries to retrieve \datasets that can be concatenated~\cite{nargesian@vldb2018} or joined with a given \dataset~\cite{zhu@sigmod2019,fernandez@icde2019,yang@icde2019}.
However, neither 
% metadata-based or dataset-oriented queries
%query classes
supports the discovery tasks illustrated in the examples above. Specifically, these tasks demand a new type of augmentation query to \emph{find \datasets that can be joined with and that also contains attributes that are correlated with those of a given query \dataset}.

\paragraph{Discovering Correlated \Datasets}
Augmentation queries over 
% a large collection of \datasets
large \dataset collections can enable
analysts to uncover previously unknown relationships that lead to new hypotheses.
In Example 1, to explore the question ``What causes traffic fatalities?'', the analyst could search the NYC Open Data~\cite{nycopendata} repository for data \emph{related} to the traffic fatalities %(TF)
\dataset, which contains the total number of daily fatalities for each zip code; and in Example 2,  she could search
for data related to the taxi demand %(TD)
\dataset, which consists of the total number of hourly pickups per zip code.

For both examples, methods devised to find joinable tables~\cite{fernandez@icde2019,zhu@vldb2016-lsh-ensemble,zhu@sigmod2019} would  find relevant
results among over 2,000+ \datasets in NYC Open Data, but they would find \emph{too many}.
Specifically, these methods find columns that have a large overlap with the input \dataset, so they would return all \datasets that contain a column with zip codes in NYC -- most of which are not helpful to the users' information needs. For the examples above, further filtering is needed. To express their information need  more precisely, analysts would benefit from asking \emph{join-correlation queries}.

\begin{definition}[\bf Join-Correlation Query]
Given a column $Q$~and~a join column $K_Q$ from a query table $\mathcal{T}_Q$, a \emph{join-correlation query} finds tables $\mathcal{T}_X$ in a dataset collection  such that $\mathcal{T}_X$ is joinable with $\mathcal{T}_Q$ on $K_Q$ and there is a column $C \in \mathcal{T}_X$ such that $Q$ is correlated with $C$.
\end{definition}
% \vspace{-.2cm}
\noindent In our first example, the analysts could issue a query to retrieve \datasets that join with the traffic fatalities \dataset and that also contain a column that correlates with the actual number of fatalities; in the second, the data scientist could search for \datasets that join with the taxi demand \dataset and that contain a column that correlates with the actual taxi demand.

%
% challenge 1: cannot compute full joins + correlation -- to expensive
%
The key challenge in evaluating \emph{join-correlation queries} is how to do so efficiently.
One approach is to first find joinable tables, and then to explicitly compute correlations between $Q$ and all columns of the discovered tables using, for instance, the Pearson's correlation coefficient for linear relationships, or the Spearman's rank correlation coefficient to capture non-linear relationships~\cite{CRR73}.
However, this approach requires the joinable \datasets returned to be downloaded and joined with the query table.
When these tables are large, 
they may not fit in memory and the cost of executing join operations can be prohibitive. Furthermore, some correlation measures are expensive to compute, e.g., to compute Spearman's correlation the data must first be sorted. 
This problem is compounded for queries that return a large number of
\datasets and require many joins and correlation computations to be
performed.
As a point of comparison, joining a \dataset on taxi pickups (approximately 1GB)  with a \dataset on precipitation (approximately 3MB) took about 29 seconds
%,\footnote{The machine used to generate these values is a MacBook Pro with has an Intel Core i5 Processor (4 cores) running at 2.4GHz, with 8GB of RAM.} 
and computing the Spearman's coefficient between the numbers of pickups and precipitation levels in the resulting joined \dataset took about 5 seconds (on a Intel Core i5 2.4GHz CPU).

\paragraph{Our Approach: Ranking \Datasets via Correlation Estimates}
To reduce the evaluation cost of join-correlation queries, as an alternative to using the entire data, we investigate the use of data synopses to estimate the results.
Data-intensive algorithms can often be optimized by reducing the size of the input data with sampling techniques~\cite{cormod@foundtrends2012}, at the cost of obtaining approximate results. In our setting, however,  na\"ively sampling data to estimate correlations does not work: it is not possible
to sample columns to estimate the correlation without first executing the join.
To support the \textit{efficient discovery of correlated columns from distinct \datasets at scale},
we propose a sketching method for \emph{estimating correlations between columns from unjoined \datasets} based on column synopses, namely \corrsketch.
% \footnote{Throughout this paper, we use terms ``synopsis'' and ``sketch'' interchangeably.}
%
These synopses are constructed using only data from individual columns, and thus they can be \emph{pre-computed and indexed} to support discovery of joinable \datasets and \emph{fast correlation estimation} at query time.

Our method constructs a synopsis $L_{\langle K_X, X \rangle}$
for any given pair of columns $\langle K_X, X \rangle$ that belongs to a table $\mathcal{T}_X$, where $K_X$ is a categorical column and $X$ is a numerical column. A pair of synopses $L_{\langle K_X, X \rangle}$  and $L_{\langle K_Y, Y \rangle}$ (for tables $\mathcal{T}_X$ and $\mathcal{T}_Y$ respectively) can be used to estimate the correlation between the numerical columns $X_{X \bowtie Y}$ and $Y_{X \bowtie Y}$,  generated \emph{after} joining tables $\mathcal{T}_X$  and $\mathcal{T}_Y$ on columns $K_X$ and $K_Y$. Note that $\mathcal{T}_X$ and $\mathcal{T}_Y$ are heterogeneous and need not have the same join keys or the same number of rows.
As such, our sketching method enables the construction of an index for a large number of tables that can be used to support \emph{both} joinability queries
and to estimate the correlation between a query column and indexed columns.

Our sketching method builds upon and extends state-of-the-art hashing techniques~\cite{bar-yossef@random2002, beyer@cacm2009, beyer@sigmod2007, yang@icde2019, huang2019joins}. In particular, we 
prove that our sketches can reconstruct a uniform random sample of the paired columns $X_{X \bowtie Y}$ and $Y_{X \bowtie Y}$. This leads to an important property: besides correlations, our sketching approach can handle \emph{any} statistic that can be estimated from random samples (e.g., entropy and mutual information). While other statistics are out of the scope of this paper, we demonstrate the flexibility of our method by using it to estimate a variety of different correlation coefficients.

We show both theoretically and experimentally that our approach is effective and provides accurate estimates for correlation.
Moreover, our analysis provides mathematical tools for dealing with approximation errors typical of sketching algorithms.
For join-correlation queries, these errors may lead to false positives: columns are returned which seem more correlated, based on the sketch, than they actually are.
This problem is an issue for queries over large \dataset collections, as there can be many false positives, simply by chance. 
To address this issue, we derive sub-sample confidence interval bounds to estimate approximation errors.
Our bounds are based on simple columns statistics like sample size and data range and, in contrast to  prior work~\cite{bishara@bjmsp2018, bishara@brm2017, berry2000monte}, do not rely on distributional assumptions.
We use these bounds to design a set of scoring functions that rank \datasets based on both their estimated correlation with a query \dataset, and on our confidence in that estimate.

We evaluate our method experimentally using both synthetic and real-world \datasets.
A comparison of the estimates produced by \corrsketch with the actual correlation values shows that it derives accurate estimates.
In addition, we assess the effectiveness of different ranking functions that leverage
\corrsketch, and show that they improve ranking performance up to 193\% in terms of mean average precision when
compared to a scoring scheme based on overlap size, commonly used for joinability queries~\cite{fernandez@icde2019,zhu@vldb2016-lsh-ensemble,zhu@sigmod2019}.

% Summary of contributions and results
%
\paragraph{Contributions}
We introduce \textit{join-correlation queries}, a new class of queries to find correlated \datasets within a collection of disconnected tables, and
propose new methods to efficiently support such queries over large \dataset collections. To the best of our knowledge, ours is the first work that addresses this problem.
Our contributions can be summarized as follows:
\vspace{-.1cm}
\begin{myitemize}%[noitemsep,nolistsep,leftmargin=5mm]
\item We propose \corrsketch, a new sketch that simultaneously summarizes information about joinability and correlation, allowing the
estimation of different correlations measures between columns of unjoined \datasets
(Section~\ref{sec:correlation-sketches}).
\item
We derive new correlation confidence interval bounds that allow us to measure the risk of estimation errors. These bounds serve as the basis for the design of scoring functions that use our sketches to rank the discovered columns (Section~\ref{sec:ranking}).
\item We perform an extensive experimental evaluation of our method and show that: \corrsketch estimates correlations with good accuracy in both synthetic and real data for different correlation measures; and the scoring functions we propose are effective and derive high-quality rankings (Section~\ref{sec:experiments}).
\end{myitemize}

\section{Preliminaries}
\label{sec:preliminaries}
Our approach extends existing hashing-based methods proposed for cardinality estimation to the join-correlation query problem. We describe these methods below.
For more details, we refer the reader to Section~\ref{sec:related} and to a comprehensive survey on the topic~\cite{cormod@foundtrends2012}. We also review the literature on correlation estimation and discuss the properties of correlation estimators that we use in this paper.

\vspace{-.3cm}
\subsection{Cardinality Estimation via Sketches}

\paragraph{Estimating Distinct Values}
The problem of determining the number of distinct values (DV) in a \dataset has been extensively studied~\cite{brown@datamgmt,ioannidis@vldb2003,padmanabhan@sigmod2003}
Since computing the \textit{exact} number of distinct elements %in large \datasets
is expensive,
% both in terms of computation and memory, 
approximate methods have been proposed that can scale to massive collections of  \datasets.
% The most effective approaches for DV estimation rely on hashing techniques~\cite{harmouch@vldb2017cardest-survey} -- they require a single pass through the data and use a bounded amount of memory. 
Effective approaches for DV estimation rely on hashing techniques, require a single pass through the data, and use a bounded amount of memory~\cite{harmouch@vldb2017cardest-survey}.

Let $h_u$ be a hash function that maps distinct values randomly and uniformly to the unit interval $[0, 1]$,
and $D$ be the number of distinct elements in a \dataset.
The key idea behind DV estimators is that, if we use $h_u$ to map elements to the unit interval and the number of distinct elements in a \dataset is large (i.e., $D \gg 1$), then the expected distance between any 
% points here = hashed values in the interval?
two neighboring points in the unit interval is $1/(D + 1) \approx 1/D$, 
and the expected value of the $k^{th}$ smallest point, $U(k)$, is estimated with $\mathbb{E}[U(k)] \approx \sum_{j=1}^{k} (1/D) = k / D$. Thus, the number of distinct values in the \dataset can be approximated by $D \approx k / \mathbb{E}[U(k)]$. 
% is U(K) the smallest k'th hash value?
The simplest estimator of $\mathbb{E}[U(k)]$ is $U(k)$ itself, yielding the basic estimator:  $\hat{D}_k^{BE} =  k/U(k)$.
% NOTE: Removed to save space
% \begin{equation}
% \label{eq:BE-estimator}
% \hat{D}_k^{BE} =  \frac{k}{U(k)}
% \end{equation}

% While this approach theoretically relies on truly random hash functions, hash functions that resemble a uniform random number generator~\cite{soren@neurips2017} suffice for practical applications.
Based on this idea, algorithms and methods for building synopses have been developed to estimate set cardinality~\cite{beyer@cacm2009}.
An example is the popular \textit{$k$ Minimum Values} (KMV) synopsis (also known as bottom-$k$ sketches), introduced by 
Bar-Yossef et al.~\cite{bar-yossef@random2002}.
% In what follows, we review some of these methods that are important for our work.
% \paragraph{KMV Synopsis}
% %
% The \textit{$k$ Minimum Values} (KMV) synopsis (also known as bottom-$k$ sketches), introduced by 
% Bar-Yossef et al.~\cite{bar-yossef@random2002}, estimates the number of distinct 
% values (DV) in large \datasets. 
Concretely, a KMV synopsis of a set $X$ comprises the $k$ minimum hash values of the elements of $X$, generated by a hash function $h_u$ mapping to $[0, 1]$. To estimate the number of distinct elements $|X|$, one can use this synopsis and a 
DV estimator, such as $\hat{D}_k^{BE}$.
% the one in Equation~\ref{eq:BE-estimator}. %$\widehat{D}_k^{BE}$, i.e., $\widehat{|X|} = k/U(k)$.
% Multi-set operations and unbiased estimator
% 
Alternatively, an improved DV estimator proposed by Beyer et.~al.~\cite{beyer@sigmod2007} can be used.
Their estimator, given by $\hat{D}_k^{UB} = (k-1)/U(k)$, is
unbiased, has a lower mean squared error, and has the minimal possible variance of any DV estimator when there are many distinct values and the synopsis size is large.

% Estimators for compound partitions.
\paragraph{Cardinality Estimation under Set Operations}
Beyer et~al.~\cite{beyer@sigmod2007}  also considered 
how multiple KMV synopses, created independently, can be combined to estimate the cardinality of sets that result from multi-set operations (e.g., union, intersection, and difference).
As an example, consider a set $X$ composed of two partitions $X_A$ and $X_B$, i.e., $X = X_A \cup X_B$. Next, let  $L_A$ and $L_B$ be the KMV synopses of sets $X_A$ and $X_B$, with sizes $k_A$ and $k_B$  respectively. We can combine $L_A$ and $L_B$ to build a valid KMV synopsis $L = L_A \oplus L_{B}$, where $\oplus$ is an operator for combining two KMV synopses. %$L_A \oplus L_{B}$ 
$L$ represents the set comprising the $k$ smallest values in $L_A \cup L_B$, where $k = min(k_A, k_B)$. To estimate the number of distinct elements $D_{\cup}$ in the union $X_A \cup X_B$, we can directly use estimator $\hat{D}_k^{UB}$
% (Equation~\ref{eq:BE-improved})
on the synopsis $L$.
% :
% \begin{equation}
% \label{eq:combined-estimator}
% \widehat{D_{\cup}} = \frac{k-1}{U(k)}
% \end{equation}
% 
% \noindent 
To estimate the number of distinct values $D_{\cap}$ in the intersection $X_A \cap X_B$, we must first compute the number of common distinct hashes in $L_A$ and $L_B$ (i.e., $K_{\cap} = |\{v \in L : v \in L_A \cap L_B\}|$). Then, we can estimate $D_{\cap}$ as:
\vspace{-.2cm}
\begin{equation}
\label{eq:intersection-estimator}
\widehat{D_{\cap}} = \frac{K_{\cap}}{k} \frac{k-1}{U(k)}
\vspace{-.1cm}
\end{equation}

% Removing because of space constraints:
% Note that term $\frac{K_{\cap}}{k}$ is an estimator for the Jaccard similarity between sets $X_A$ and $X_B$~\cite{beyer@sigmod2007}.

\subsection{Correlation Estimation}
\label{sec:prelim_corr_estimation}

The problem of measuring dependence between a pair of vectors has been studied for over a  century~\cite{rodgers@tas1988}, and new correlation measures continue to be developed~\cite{baak2020new, szekely@tas2007}. Pearson's correlation coefficient is one of the oldest and most widely used correlation measures~\cite{CRR73}. While our methods can be used to estimate 
%\hide{essentially }
any measure of correlation, we use Pearson's as our main motivating example. %, and it is the focus of our theoretical bounds.

When applied to a population, Pearson's correlation coefficient is usually referred to as $\rho$~\cite{CRR73}. 
For a pair of random variables $\langle X, Y \rangle$, the coefficient is defined as:
% \vspace{-.2cm}
\begin{equation}
\label{eq:pearson}
\rho_{XY} = \frac{\mathbb{E}[ (X - \mu_{X}) (Y - \mu_{Y}) ]}{\sigma_{X}\sigma_{Y}}
\end{equation}
where $\sigma_X$ (resp. $\sigma_Y$) is the standard deviation of the random variable $X$ (resp. $Y$), and $\mu_X$ (resp. $\mu_Y$) is the mean of $X$  (resp. $Y$). $\rho_{XY}$ can be estimated with a finite sample from distributions $X$ and $Y$ using what is usually referred to as Pearson's sample correlation ($r$):
% \vspace{-.2cm}
\begin{equation}
\label{eq:pearson-sample}
r_{XY} = \frac{\sum ^n _{i=1}(x_i - \bar{x})(y_i - \bar{y})}{\sqrt{\sum ^n _{i=1}(x_i - \bar{x})^2} \sqrt{\sum ^n _{i=1}(y_i - \bar{y})^2}}.
\end{equation}
Above $n$ is the sample size, $x_{i}$ and $y_{i}$ are individual
samples,
% sample points,
and $\bar{x}$ and $\bar{y}$ are, respectively, the means of the sub-samples of $X$ and $Y$.

There is a lot of prior work on understanding the accuracy of sample correlation estimators, but these works typically make strong data assumptions.
When data follows a bivariate normal distribution, the sampling distribution of $r$ is asymptotically normal and centered around $\rho$~\cite{shevlyakov@robust-corr-book}.  For a finite sample of size $n$, the variance of $r$ is known to depend both on $n$ and on the underlying population correlation $\rho$ ~\cite{shieh@brm2010}:
\vspace{-.2cm}
\begin{equation}
\label{eq:pearsons-variance}
Var(r) = \frac{ (1 - \rho^2)^2 }{ n - 1} 
\end{equation}

When data is not normally distributed (as is often the case in practice), less is known. Nevertheless, there has been an increasing interest in the non-normal setting in recent years ~\cite{yuan@jmva2000, dewinter@psymethods2016, bishara@psymethods2012, bishara-hittner@epm2015, bishara@brm2017, bishara@bjmsp2018, hu@tas2020}. 
Yuan and Bentler \cite{yuan@jmva2000, yuan2005effect} show asymptotically that the standard deviation of the sample estimator $r$ depends on the joint fourth-order moments, or kurtoses, of the variables. In agreement, empirical simulations confirm that the presence of a high excess kurtosis can lead to increased bias and estimator errors~\cite{dewinter@psymethods2016,bishara@brm2017}. 

\paragraph{Robustness and Alternative Correlations} 
One challenge in obtaining finite sample accuracy bounds, as we do in Section~\ref{sec:confidence-interval}, is that Pearson's correlation coefficient is known to be sensitive to outliers.
In fact, it has been shown that a single sample $(x_i, y_i)$ can have an unbounded effect on the correlation and 
can potentially lead to catastrophic estimation errors~\cite{devlin@biometrika1975}.
This fact has spurred the development of correlation estimators that are robust against outliers and distribution contamination~\cite{devlin@biometrika1975, shevlyakov@robust-corr-book}. However, these robust estimators are less efficient than $r$, i.e., require larger sample sizes. We refer the reader to \cite{shevlyakov@robust-corr-book} for an extensive review of robust correlation estimators.
Resampling-based approaches such as bootstrapping~\cite{efron1994introduction} can also be used to reduce error in estimating Pearson's correlation, especially at small sample sizes~\cite{bishara@brm2017}. These approaches, however, have a much higher computational cost~\cite{bishara@brm2017}.

% \paragraph{Alternative Correlation Measures} 
Alternative correlation measures, such as the Spearman's rank correlation coefficient~\cite{CRR73}, and data distribution transformations, such as the Rank-based Inverse Normal (RIN)~\cite{bliss1967statistics, bishara-hittner@epm2015}, may be more effective than Pearson for highly non-normal data~\cite{bishara@brm2017}.
However, they have different semantics from Pearson's (e.g., they capture non-linear relationships), and thus the choice of correlation measure depends on the user's application.
All of these correlations can be estimated with our approach, and we experimentally show that the accuracy of our estimates for these estimators are similar in the data collections we have considered (Section~\ref{sec:experiments}).

\section{Estimating Join-Correlation}
\label{sec:correlation-sketches}

\begin{figure}[t]
\centering
\scriptsize
\fontfamily{phv}\selectfont
\parbox{.25\linewidth}{
\centering
    \begin{tabular}{cc}
    \multicolumn{2}{c}{$\bm{\mathcal{T}_X}$}        \\
    \hline
    \textbf{$K_X$} & \textbf{$X$} \\
    \hline
	2021-01  &  6.0 \\ % h=0x4da33cf5 hu=0.34
	2021-02  &  4.0 \\ % h=0xbd5a7c1f hu=0.89
	2021-03  &  2.0 \\ % h=0x16dab449 hu=0.34
	2021-04  &  3.0 \\ % h=0x26f79756 hu=0.47
	2021-05  &  0.5 \\ % h=0xb74de490 hu=0.74
	2021-06  &  4.0 \\ % h=0x4e311300 hu=0.55
	2021-07  &  2.0 \\ % h=0xbac52e98 hu=0.48
    \hline
    \end{tabular}
}
% \hfill
\parbox{.25\linewidth}{
    \centering
    \begin{tabular}{cc}
    \multicolumn{2}{c}{$\bm{\mathcal{T}_Y}$}        \\
    \hline
    \textbf{$K_Y$} & \textbf{$Y$} \\
    \hline
	2021-01  &  5.5 \\ % h=0x4da33cf5 hu=0.34
	2021-01  &  4.5 \\ % h=0x4da33cf5 hu=0.34
	2021-02  &  3.9 \\ % h=0xbd5a7c1f hu=0.89
	2021-02  &  2.0 \\ % h=0xbd5a7c1f hu=0.89
	2021-03  &  4.0 \\ % h=0x16dab449 hu=0.34
	2021-03  &  1.0 \\ % h=0x16dab449 hu=0.34
	2021-04  &  4.0 \\ % h=0x26f79756 hu=0.47
    \hline
    \end{tabular}
}
% \hfill
\parbox{.4\linewidth}{
    \centering
    \begin{tabular}{ccc}
    \multicolumn{3}{c}{$\bm{\mathcal{T}_{X \bowtie Y}}$}        \\
    \hline
    \textbf{$K_{X \bowtie Y}$} & \textbf{$X_{X \bowtie Y}$} & {$Y_{X \bowtie Y}$} \\
    \hline
	2021-04  &  3.0  &  4.0 \\ % h=0x4da33cf5 hu=0.34
	2021-03  &  2.0  &  2.5 \\ % h=0x4da33cf5 hu=0.34
	2021-02  &  4.0  &  3.0 \\ % h=0xbd5a7c1f hu=0.89
	2021-01  &  6.0  &  5.0 \\ % h=0xbd5a7c1f hu=0.89
    \hline \\ \\
    \end{tabular}
} 
% \hfill
\vspace{-.2cm}
\caption{
Table $\mathcal{T}_{X \bowtie Y}$ is the join of the input tables $\mathcal{T}_X$ and $\mathcal{T}_Y$, aggregated using the \texttt{mean} function.
\corrsketch efficiently reconstructs a sample of the table $\mathcal{T}_{X \bowtie Y}$ to estimate correlation between the columns $X_{X \bowtie Y}$ and $X_{X \bowtie Y}$, without computing the full join.  
}
\vspace{-.3cm}
\label{fig:example-tables}
\end{figure}

Before presenting \corrsketch, we introduce notation and formally define the join-correlation estimation problem.
%
% Without loss of generality, 
Consider %as an example the tables described in
% Figure~\ref{fig:example-tables}.
a query table $\mathcal{T}_X$ composed of a categorical column $K_X$
and a numerical column $X$, and a table $\mathcal{T}_Y$ 
in a dataset collection 
%$\cal{D}$ 
containing a categorical
column $K_Y$ and a numerical column $Y$
(we discuss below how multi-column tables are handled). 
Columns $K_X$ and $K_Y$ %are candidate keys %(i.e., their values uniquely identify all tuples in their tables), 
% have overlapping sets of values overlap and
%
are  the join attributes, i.e., $\mathcal{T}_{X \bowtie Y} = \pi_{k, x_k, y_k} (\mathcal{T}_X\bowtie_{K_X=K_Y} \mathcal{T}_Y)  = \{ \langle k, x_k, y_k \rangle : k \in K_X \cap K_Y \}$.
We denote by $x_k$ and $y_k$ the numerical values of $X$ and $Y$ (respectively) associated with the row identified by key $k$.
This is illustrated in Figure~\ref{fig:example-tables}.

\begin{definition}[\bf Join-Correlation Estimation]
Given two tables $\mathcal{T}_X$ and $\mathcal{T}_Y$,
% numerical columns $X_{X \bowtie Y}$ and $Y_{X \bowtie Y}$,
we aim to efficiently estimate the correlation $r_{X \bowtie Y}$ of the numerical attributes $X_{X \bowtie Y}$ and $Y_{X \bowtie Y}$  in $\mathcal{T}_{X \bowtie Y}$ without having to compute the join and aggregations for $\mathcal{T}_X$ and $\mathcal{T}_Y$.
\end{definition}

\noindent One  approach to estimate $r_{X \bowtie Y}$ is to use data sketches instead of  the full \datasets.  
To do so, we can build synopses $L_{\langle K_X,X \rangle}$ and $L_{\langle K_Y,Y \rangle}$ 
%that summarize the column pairs 
that serve as summaries of $\langle K_X, X \rangle$ and $\langle K_Y, Y \rangle$  respectively.
However, na\"{\i}ve approaches do not yield useful summaries.

\paragraph{Limitations of Random Sampling}
Consider, for example, two numerical vectors of size $n$, $S_X$ and $S_Y$, randomly sampled from the numerical columns $X \in \mathcal{T}_X$ and $Y \in \mathcal{T}_Y$, respectively. The correlation between $S_X$ and $S_Y$ is not a valid estimate of the correlation between $X_{X \bowtie Y}$ and $Y_{X \bowtie Y}$ because the pairs $\langle x \in S_X, y \in S_Y \rangle$ are not aligned.
By sampling directly from $X$ and $Y$, we lose information on what keys are associated with what numerical values, which is necessary  to  align pairs  $\langle x_k, y_k \rangle$. 
Another alternative 
%To include keys in the samples, an alternative 
would be to include the keys by randomly sampling rows from original tables $\mathcal{T}_X$ and $\mathcal{T}_Y$, and then joining the row samples.
However, since the final set of keys $k \in K_{X \bowtie Y}$ depends on input columns $K_X \in \mathcal{T}_X$ and $K_Y \in \mathcal{T}_Y$, it is unlikely that the keys selected from $K_X$ \textit{and} contained in $K_{X \bowtie Y}$ will also be selected from $K_Y$. We discuss this issue more formally in the next subsection.

%------------------------

\begin{figure}[t]
% Set helvetica font for this figure
% \footnotesize
\scriptsize
\fontfamily{phv}\selectfont
\parbox{.35\linewidth}{
    \centering
    \begin{tabular}{ccc}
    \multicolumn{3}{c}{$\bm{L_{\langle K_X,X \rangle}}$} \vspace{2px} \\
    \hline
    $h(k)$ & $h_u(k)$ & $x_k$ \\
    \hline
	bac52e98    &  0.48  &  2.0 \\
	16dab449    &  0.34  &  2.0 \\
	26f79756    &  0.47  &  3.0 \\
	4da33cf5    &  0.34  &  6.0 \\
    \hline \\
    \end{tabular}
}
\parbox{.34\linewidth}{
    \centering
    \begin{tabular}{ccc}
    \multicolumn{3}{c}{$\bm{L_{\langle K_Y,Y \rangle}}$} \vspace{2px} \\
    \hline
    $h(k)$ & $h_u(k)$ & $y_k$ \\
    \hline
	16dab449    &  0.34  &  2.5 \\
	bd5a7c1f    &  0.89  &  3.0 \\
	26f79756    &  0.47  &  4.0 \\
	4da33cf5    &  0.34  &  5.0 \\
    \hline \\
    \end{tabular}
}
\parbox{.29\linewidth}{
    % \vspace{0px}
    \centering
    \begin{tabular}{ccc}
    \multicolumn{3}{c}{$\bm{L_{\langle X \bowtie Y \rangle}}$} \vspace{2px} \\
    \hline
    \textbf{$h(k)$} & \textbf{$x_k$} & $y_k$ \\
    \hline
	16dab449    &  2.0 &  2.5 \\
	26f79756    &  3.0 &  4.0 \\
	4da33cf5    &  6.0 &  5.0 \\
    \hline \\
    \end{tabular}
}
\vspace{-.5cm}
\caption{
The tables $L_{\langle K_X,X \rangle}$ and $L_{\langle K_Y,Y \rangle}$ represent correlation sketches for the tables $\mathcal{T}_X$ and $\mathcal{T}_Y$, for sketch size $n = 3$ and \texttt{mean} aggregation.
While we explicitly show the column $h_u(k)$ for illustrative purposes, it does not need to be stored as it can be easily computed from $h(k)$.
}

\label{fig:example-sketches}
\vspace{-.5cm}
\end{figure}

\vspace{-.1cm}
\subsection{\corrsketchnormal}

\corrsketchnormal address the limitations of na\"ive approaches by \emph{enabling the reconstruction of a uniform random sample of the joined table}.
Our method uses hashing techniques to carefully select a small sample of tuples $\langle k, x_k \rangle$ from a table %, i.e.,  % column pair
$\mathcal{T}_X = \langle K_X, X \rangle$, that is used to build a sketch which enables data from different tables to be aligned and correlations to be estimated.

\paragraph{Sketch Construction} We use two different hashing functions to create the sketch $L_{\langle K_X, X \rangle}$ for table $\mathcal{T}_X$. The first one, $h$, is a collision-free hash function that randomly and uniformly maps key values $k \in K_X$ into distinct integers. 
Given that these integers $h(k)$ are unique, they are used as the tuple identifiers in sketch  $L_{\langle K_X, X \rangle}$.
Next, we use hashing function $h_u$ to map integers $h(k)$ to real numbers in the range $[0, 1]$, uniformly at random. 
The function $h_u$ plays a key role in the selection of the $n$ tuples that compose the sketch $L_{\langle K_X, X \rangle}$: the tuples that correspond to the $n$ smallest $h_u$ values are the ones included in the sketch.
More formally, we select $n$ samples of pairs $\langle h(k), x_k \rangle$ with minimum values of $h_u(k)$, i.e., $L_{\langle K_X, X \rangle} = \{\langle h(k), x_k \rangle :  k \in min(k, h_u(k))\}$, where $min$ is a function that returns a set containing the keys $k$ with the $n$ smallest values of $h_u(k)$.
To illustrate this, consider the example table $\mathcal{T}_X$ in Figure~\ref{fig:example-tables}. 
To build a correlation sketch, we apply the hashing functions to each one of the keys $k \in K_X$ and then select the $n$ tuples associated~with the smallest hashed values, which are used to create the \textit{correlation sketch} $L_{\langle K_X, X \rangle}$ in Figure~\ref{fig:example-sketches}.

Once these sketches are created (independently) for separate tables, they are used to estimate correlation by computing a \emph{joined sketch} $L_{X \bowtie Y}$, also illustrated in Figure~\ref{fig:example-sketches}. $L_{X \bowtie Y}$ has a row for every key $k$ that appears in both $L_{\langle K_X, X \rangle}$ and $L_{\langle K_Y, X \rangle}$. As we argue in Theorem \ref{thm:uniform}, this table contains a uniform random sample of paired numerical values from $\mathcal{T}_{X \bowtie Y}$, so it can be used directly to estimate correlation or any other statistic over $\mathcal{T}_{X \bowtie Y}$.

The accuracy of an estimate, however, depends on the size of this sample being large, i.e., on $L_{X \bowtie Y}$ having many rows. The key idea behind our method is that, 
by selecting samples using $h_u$, we introduce dependence that increases the probability of $L_{\langle K_X, X \rangle}$ and $L_{\langle K_Y, X \rangle}$ including the same keys \cite{huang2019joins}.
To see this, consider an extreme example where both $K_X$ and~$K_Y$ have the same set of $N$ distinct keys.
%(i.e., if sampled uniformly at random, each key $k$ has probability of being chosen equal to $1/N$).
Suppose that $n$ key-value tuples $\langle k, x_k \rangle$ are included in $L_{\langle K_X, X \rangle}$ uniformly at random (without hashing), and $n$ are also included in $L_{\langle K_Y, Y \rangle}$, independently and uniformly at random.
If a key $k$ is included in $L_{\langle K_X, X \rangle}$, the probability that it also appears in $L_{\langle K_Y, Y \rangle}$ is $n/N$.
Thus, the expected number of rows in $L_{X \bowtie Y}$ is $n^2/N$, vanishingly small when the sketch size $n$ is much smaller than $N$. With little or no key overlap, we have no way of effectively estimating correlation from $L_{X \bowtie Y}$.
On the other hand, when the inclusion is determined by the values of $h_u(k)$, the events become dependent: if $k$ is included in $L_{\langle K_X, X \rangle}$, $k$ must also be included in $L_{\langle K_Y, Y \rangle}$.
In this case, the number rows in $L_{X \bowtie Y}$ increases to $n$, the maximum number possible.
In a less extreme case where $K_X$ and~$K_Y$ do not have the exact same keys, the expected number of keys included in both sketches will depend on the Jaccard similarity between the key sets, but in any case, will be much larger than the $n^2/N$ obtained by uniform random sampling.

% and also in $L_{\langle K_Y, Y \rangle}$, the probability of $k$ being included in both sketches is then $=1/N^2$.

% with probabilities $p$ and $q$, respectively.
% When the events are independent (e.g., Bernoulli sampling), if the key $k$ is included in $L_{\langle K_X, X \rangle}$, then the probability of $k$ being included in $L_{\langle K_Y, Y \rangle}$ is still $q=1/N$. The total probability of $k$ being included in both sketches is then $pq=1/N^2$.
% However, when the inclusion is determined by the values of $h_u(k)$, the events become dependent: if $k$ is included in $L_{\langle K_X, X \rangle}$, $k$ must also be included in $L_{\langle K_Y, Y \rangle}$, i.e., the probability $q$ becomes $1$. In this case, the probability of $k$ being included in both sketches increases to $pq=1/N$.

Note that the use of a hashing function such as $h_u$ to introduce dependence is not a new idea -- it has been used in many algorithms~\cite{beyer@cacm2009, vengerov@vldb2015, dasgupta@icdt2016-theta-sketches, huang2019joins, yang@icde2019}. Part of our contribution lies in the combination of $h_u$ with another function $h$ to generate tuple identifiers that \emph{allow the alignment of paired data samples at estimation time}.
\corrsketchnormal contains both the $n$ minimum hashed values $h(k)$ and their corresponding numerical values $x_k$ from column $X$. 
By keeping the hash of the key, it is possible to
align the numerical values with values in other tables that are associated
with the same key, and by storing the numerical values, we can estimate the correlations between the numeric columns.

\paragraph{Handling Repeated Keys}
The process described above assumes that the keys uniquely identify each row in a table.
However, real-world data often contain repeated categorical values 
(as in column $K_Y$ in Figure \ref{fig:example-tables}).
% In such cases, for each key $k \in K_X$ there is a set of values associated with the key, i.e.,  $\{x_k : k \in K_X \}$.
In such cases,  there is a set of values associated with each distinct key $k$. For instance, in Figure~\ref{fig:example-tables}, the set of values $\{5.5, 4.5\}$ is associated with the key ``\texttt{2021-01}''.
Because correlation is only defined for sets of paired values, downstream applications that use correlation typically aggregate the numeric values associated with a key into a single number. This can be done by applying  a user-defined function (e.g., \textit{mean}, \textit{sum}, \textit{maximum}, \textit{minimum}, \textit{first}, \textit{last}) to perform the aggregation before computing the  correlation.  
In Figure~\ref{fig:example-tables},
the column  $Y_{X \bowtie Y}$ contains the aggregated values of $Y$ using the
\textit{mean} function after the join between $\mathcal{T}_X$ and $\mathcal{T}_Y$.

Repeated keys can be handled during sketch construction: 
whenever a key $k$ that already exists in the set of hashed minimum values is found again at time $t$, an aggregate function $f$ can be applied to compute the value for time $t$ by aggregating the existing $x_k^{t-1}$ with the new incoming $x_k$, i.e., $x_k^t = f(x_k, x_k^{t-1})$.
As long as the aggregation can be computed in a streaming fashion, the synopses can also be computed with a single pass over the data.
$L_{\langle K_Y, Y \rangle}$ in Figure~\ref{fig:example-sketches} shows a sketch  constructed from table $\mathcal{T}_Y$ in Figure~\ref{fig:example-tables}, using  \textit{mean} as the aggregate function $f$.

Note that the choice of function affects the semantics of the data, thus this selection must be made by taking into account the requirements of the downstream application that makes use of the sketches.
% If more than one aggregate function is needed, one synopsis can be built for each function.
Nonetheless, our synopsis is agnostic to such aggregations, and can easily be extended to  take as input one or more functions.

%-------------------------------

\paragraph{Sketches for Multi-Column Tables}
For simplicity, we described how to build sketches from a binary table,
%$\langle K_X, X \rangle$, 
but it is trivial to extend to process to multi-column tables.
For example, if a table contains multiple columns, $\mathcal{T}_{XZ} = \{K_{XZ}, X, Z\}$, the correlation sketch could be extended to $L_{\langle K_{XZ}, X, Z \rangle} = \{\langle h(k), x_k, z_k \rangle :  k \in min(k, h_u(k))\}$.
Alternatively, one could simply build one sketch for each pair of keys and numeric columns, e.g., $\langle K_X, X \rangle$ and $\langle K_Z, Z \rangle$.

\vspace{-.1cm}
\subsection{Estimating Join-Correlation}
An important property of \corrsketch is that it enables the construction of a \emph{uniform random sample} of the join $\mathcal{T}_{X \bowtie Y}$.
This can be formally stated as (proof is deferred to the appendix): 
\begin{theorem}
\label{thm:uniform}
The set of paired numeric values $\langle x_k, y_k \rangle \in L_{X \bowtie Y}$ is a uniform random sample of the set of paired numeric values $\langle x_k, y_k \rangle \in \mathcal{T}_{X \bowtie Y}$. 
\end{theorem}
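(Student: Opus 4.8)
The plan is to route all of the randomness back to the hash function $h_u$ and then exploit its symmetry across keys. Throughout I model $h$ as collision-free on $K_X \cup K_Y$ (as assumed in the construction) and $h_u$ as assigning to each distinct argument an independent value drawn uniformly from $[0,1]$; since these values are continuous, ties occur with probability zero and can be ignored. Because $h$ is injective, $k \mapsto h_u(h(k))$ is itself a fresh i.i.d.\ uniform labeling of the keys, so I will write $U_k := h_u(h(k))$ for $k \in K_X \cup K_Y$. Put $I = K_X \cap K_Y$; the rows of $\mathcal{T}_{X \bowtie Y}$ are exactly $\{\langle k, x_k, y_k\rangle : k \in I\}$, where $x_k$ (resp.\ $y_k$) is the value for $k$ after aggregation, and by construction the value stored for $k$ in $L_{\langle K_X, X \rangle}$ is this same $x_k$ (similarly $y_k$ in $L_{\langle K_Y, Y \rangle}$). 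Let $R \subseteq I$ be the set of common keys that survive into the joined sketch $L_{X \bowtie Y}$. Then the paired values appearing in $L_{X \bowtie Y}$ are precisely $\{\langle x_k, y_k\rangle : k \in R\}$, so the theorem reduces to showing that $R$ is a uniform random subset of $I$, in the sense that conditioned on $|R| = m$ every $m$-element subset of $I$ is equally likely.

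The first step is a structural observation: among the common keys, inclusion in a single sketch is \emph{downward closed} with respect to $U$. Indeed $k \in K_X$ lies in $L_{\langle K_X, X \rangle}$ iff its rank $\#\{k' \in K_X : U_{k'} \le U_k\}$ is at most $n$; if $k, k'' \in I$ with $U_k < U_{k''}$ and $k''$ is included, then the rank of $k$ is strictly smaller than that of $k''$, hence at most $n-1 < n$, so $k$ is included too. The same holds for $K_Y$, and an intersection of two downward-closed sets is downward closed. Hence, listing the common keys as $k_1, k_2, \dots$ with $U_{k_1} < U_{k_2} < \cdots$, we get $R = \{k_1, \dots, k_M\}$ for some random prefix length $M$; this is also the concrete link to the Jaccard similarity of the key sets alluded to in the text. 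Note, however, that $M$ can depend on the $U$-values of $I$ as well as of $K_X \setminus I$ and $K_Y \setminus I$, so the prefix observation by itself does not yet give uniformity conditioned on $M$.

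The second step supplies that via symmetry. Fix any permutation $\sigma$ of $I$, extended by the identity on the non-common keys. Relabeling the keys of $I$ by $\sigma$ before running the whole construction produces exactly $\sigma(R)$, because every rule used --- ``keep the $n$ keys of $K_X$ with smallest $U$'', the analogous rule for $K_Y$, and ``intersect the resulting key sets'' --- refers to keys only through their $U$-values and through equality of keys across the two tables, all of which are preserved by a consistent relabeling. On the other hand, since $(U_k)_{k \in K_X \cup K_Y}$ is i.i.d., its law is invariant under $\sigma$. Combining the two, $R \stackrel{d}{=} \sigma(R)$ for every $\sigma \in \mathrm{Sym}(I)$, i.e.\ $R$ is an exchangeable random subset of $I$; and an exchangeable random subset, conditioned on its cardinality $m$, is uniform over the $m$-subsets of $I$. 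Transporting this through the map $k \mapsto \langle x_k, y_k\rangle$ shows the rows of $L_{X \bowtie Y}$ are a simple random sample without replacement from the rows of $\mathcal{T}_{X \bowtie Y}$: of random size $M$, and uniform over all subsets of that size.

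I expect the only genuinely delicate point to be this last one --- that $R$ is uniform \emph{conditioned on its size}, even though the size $M$ is itself a nontrivial function of the hash values, including those of the common keys. The prefix structure alone does not close this gap, so the argument relies on the exchangeability/equivariance step; everything else (idealizing $h_u$, matching the stored aggregated values to the rows of the join, and discarding the probability-zero tie events) is routine bookkeeping.
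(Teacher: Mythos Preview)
Your proposal is correct and shares its skeleton with the paper's proof, but it is more careful on exactly the point you single out as delicate. Both arguments rest on the same structural fact: the surviving common keys $R = L_{K_X} \cap L_{K_Y}$ always form a prefix of $I = K_X \cap K_Y$ when the keys are ordered by $U_k = h_u(h(k))$; the paper phrases this as saying that $R$ coincides with the size-$|R|$ bottom-$m$ sample $S_{X \bowtie Y}$ of $I$. From there the paper simply concludes that $R$ is a uniform random sample, leaning implicitly on the i.i.d.\ nature of the $U_k$ without addressing the conditioning on $|R|$. Your exchangeability step---observing that the entire construction is equivariant under permutations of $I$ (extended by the identity off $I$) while the law of $(U_k)$ is invariant under such permutations, whence $R \stackrel{d}{=} \sigma(R)$ for all $\sigma \in \mathrm{Sym}(I)$---makes explicit what the paper leaves tacit, and cleanly closes the gap that the random size $M$ depends on the hash values of common keys as well as non-common ones. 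So: same core idea, but your route actually justifies the final ``therefore'' in the paper's argument.
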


\noindent Common measures of correlation, such as
Pearson and Spearman, can be approximated with a sub-sample of data from those columns, as long as that sample is taken uniformly at random. 
% This is typically called a sample correlation estimator. 
Theorem~\ref{thm:uniform} forms the basis of our algorithm for join-correlation estimation, which consists of two steps:
(1) create the synopsis table $L_{X \bowtie Y }$ by performing a join between two synopses $L_{X}$ and $L_{Y}$ on their hashed keys $h(x)$ (as illustrated in Figure \ref{fig:example-sketches}), and (2) apply \emph{any} sample correlation estimator to the numerical data of $L_{X \bowtie Y }$ to estimate the correlation between columns $X$ and $Y$ in $\mathcal{T}_{X \bowtie Y}$.

\vspace{-.1cm}
\subsection{Discussion} \label{sec:corrsketches-discussion}
From Theorem \ref{thm:uniform}, we know that correlation sketches provide a valid estimate of the correlation between any two data columns \emph{after a join} -- i.e., between $X_{X\bowtie Y}$ and $Y_{X\bowtie Y}$.
While this estimate is often accurate, it is based on a sub-sample of data and inaccurate estimates are inevitable (see e.g., Figure~\ref{fig:scatterplots-all-datasets}). Of course, this will be true for any randomized estimator, not just our algorithm.

% \paragraph{Variance Analysis}

The variance of \corrsketchnormal estimates depends on the correlation estimator used. In general, as we show in Section \ref{sec:experiments}, correlation estimates converge to the true correlation when the sketch join sample size (i.e., the number of rows in $L_{\langle X \bowtie Y \rangle}$) increases. 
This sketch join size depends on multiple factors. First, there is a space-accuracy trade-off: as the number of minimum hash $n$ increases, the probability of having larger join sizes also increases. The sketch join size also depends on the distribution of the join keys.
Therefore, the hashing selection strategy used to include tuples in the synopsis affects the sketch intersection size, and ultimately it also affects the variance of correlation estimation.

The fixed-size sample selection strategy adopted in this paper is similar to the hashing strategy in~\cite{beyer@sigmod2007}, i.e., the sketch contains the $n$ minimum values of $h_u(k)$.
However, there is a wide range of possible hashing strategies with variable size \cite{dasgupta@icdt2016-theta-sketches, yang@icde2019} that may have different accuracy-space trade-offs and could be used in \corrsketchnormal. Exploring the effect of different selection strategies on join-correlation estimation is an open problem that we would like to explore in future work.

% \vspace{-.2cm}
Another benefit of \corrsketchnormal is that it retains all information contained in a KMV sketch~\cite{bar-yossef@random2002, beyer@sigmod2007}.
Therefore, it allows not only for the estimation of correlations associated to numerical columns, but it also enables the estimation of  all statistics supported by the family of minimum-value sketches (e.g., cardinality, Jaccard containment, and similarity). These could be used, for example, to estimate the number of distinct elements in each individual column ($K_X$ and $K_Y$), the containment of $K_X$ in $K_Y$, and the size of the resulting join table $\mathcal{T}_{X \bowtie Y}$.

\subsection{Implementation}
%\paragraph{Implementation Details}
We used the well-known 32-bits MurmurHash3 function to implement $h$, since it has been shown to perform similarly to truly random hashing functions~\cite{soren@neurips2017}. For $h_u$, we used Fibonacci hashing~\cite{knuth1997art}, a simple multiplicative hashing function (also known as the \textit{golden ratio multiplicative hashing}).
To build the synopses, we implemented a tree-based algorithm similar to the one described in~\cite{beyer@sigmod2007}.
In summary, the algorithm performs one pass in the data while maintaining a tree that keeps the $n$ tuples $\langle h(k), h_u(k), x_k \rangle$ with minimum $h_u(k)$ values.
As we discuss in Section~\ref{sec:exp-estimators}, we also implemented multiple correlation estimators.

\section{Ranking Correlated Columns}
\label{sec:ranking}

To query large dataset collections, we focus on a variation of join-correlation queries that retrieve the top-$k$ results,
which we define more formally as follows.

\vspace{-.15cm}
\begin{definition}[\bf Top-$k$ Join-Correlation Query]
Given a column~$Q$ and a join column $K_Q$ from a query table $\mathcal{T}_Q$, find the top-$k$ tables~$\mathcal{T}_X$ in a dataset collection such that $\mathcal{T}_X$ is joinable with $\mathcal{T}_Q$ on $K_Q$ and has the highest after-join correlations between a column $C \in \mathcal{T}_X$ and~$Q$.
\end{definition}
\vspace{-.15cm}

Exact top-$k$ join-correlation queries can be answered by finding all tables $\mathcal{T}_X$ that are joinable on $K_Q$, performing a full join, and finally finding the tables that have the $k$ greatest correlations. As we discussed, this is inefficient and does not scale when querying large dataset collections.
%However, given that this is inefficient, 
Instead, we propose an efficient approach to compute approximate answers for these queries that uses \corrsketchnormal to \emph{rank the results based on correlation estimates}.
%top-$k$ join-correlation queries, i.e., 
%as a trade-off between efficiency and allowing eventual ranking errors.
%

The possibility of error presents a challenge for answering approximate join-correlation queries. While searching for columns correlated with query column $Q$ within a large collection of disconnected tables, there will typically be many more poorly-correlated columns than highly-correlated ones. 
In this ``needle-in-a-haystack'' setting, estimation errors will lead to many false positive results: inevitably some poorly-correlated columns will look far more correlated with $Q$ than they actually are, even more correlated than the columns we aim to find. Therefore, simply ranking results based on the correlation estimates may eventually produce poor rankings.
In the remainder of this section,
we propose a framework for scoring columns that addresses this problem.

% \as{(R3.O1): how to avoid the quadratic querying cost?}
Another important aspect of an implementation of our method is query evaluation.
% \corrsketchnormal provide a way to efficiently estimate correlations between two given columns.
While \corrsketchnormal efficiently estimate join-correlations, a na\"ive approach that computes correlations for all possible column pairs can still be expensive for large collections.
Computing correlations for all pairs, however, is not necessary: not all possible join keys have a high key overlap to yield useful joins.
To efficiently answer top-$k$ join-correlation queries we can leverage efficient data structures and query processing algorithms for set overlap search.
Recall that a sketch includes a set of pairs $\langle h(k), x_k \rangle$.
Since $h(k)$ is a discrete value, we can leverage existing data structures for efficient querying such as inverted indexes available in off-the-shelf systems (e.g., PostgreSQL, Apache Lucene) and efficient query processing algorithms for set similarity search such as JOSIE \cite{zhu@sigmod2019}, ppjoin+ \cite{xiao2011efficient}, CRSI \cite{venetis2012crsi}, and Lazo \cite{fernandez@icde2019}.

\subsection{Ranking with Uncertain Estimates}
\label{sec:ranking-with-estimates}

To take into account the uncertainty associated with the estimations, we adopt a simple 
risk-averse scoring framework that selects  $k$ results and maximizes:
% 
% \vspace{-.5em}
\vspace{-.15cm}
\begin{equation} \label{eq:risk-aversion-scoring}
\max \sum_{i=1}^{k} \Big( |\hat{r}_{Q \bowtie C_i}| * \big(1 - risk(Q, C_i) \big) \Big)
% \max \sum_{i=1}^{k} \Big( |r| * \left(1 - risk(L_{\langle K_Q, Q \rangle}, L_{\langle K_{C_i}, C_i \rangle}) \right) \Big) 
\vspace{-.1cm}
\end{equation}
where $|\hat{r}_{ Q \bowtie C_i }|$ is the absolute value of the correlation estimate computed using a correlation estimator applied to $L_{Q \bowtie C_i}$, and $risk(Q, C_i)$ is a function that returns a number in the range $[0,1]$ and measures the dispersion of the correlation estimates using $L_{Q \bowtie C_i}$,  such as standard error or confidence interval length.
Intuitively, whenever the risk associated with the estimation is non-zero, a penalty factor proportional to the risk is applied to the estimate.
Note that we consider the absolute of the correlation in Eq. \ref{eq:risk-aversion-scoring}, given that negative correlations can be as useful as positive correlations.
% Formally, for each discovered column $C$, we want to assign a score $s$ such that ranking the discovered columns on $s$ will yield a ranked list similar to the list produced by ranking the columns by $|r_{X \bowtie Y}|$.
% Note that negative correlations can be  as useful as positive correlations, so for ranking purposes, we consider the absolute values of the correlations in the scoring functions.

\vspace{-.1cm}
\subsection{Measuring the Estimation Error Risk}
\label{sec:estimation-risk}

% A simple scoring function, that handles both positive and negative correlations, is to rank columns based on the absolute estimate computed using the sketches:
% $s = |\hat{r}_{\langle X \bowtie Y \rangle}|$.

% As shown in previous sections, the variance of the Pearson's correlation estimator is large when the underlying distribution is not normal. The variance decreases as a factor of the sample size and 
% 
% In the presence of noisy estimates, to avoid placing false positives in the top of the list, one possible approach  is to adjust the ranking based on the likelihood of making  estimation errors. This can be achieved by using statistics such as the standard error of the estimator or the length of the confidence interval.

A natural approach to measure the risk of estimation error is to use standard statistics such as the standard error of the estimator or the length of the confidence interval.
In our scenario, however, we do not have access to these: 
since we do not explicitly join the columns (given that doing so is computationally prohibitive), we have little information about the distributions of column values.
For instance, it is not possible to compute the exact variance (and similarly standard deviation or standard error) for the correlation because it depends on the underlying correlation of the complete data and fourth-order moments of the variables~\cite{yuan@jmva2000} (see Equation~\ref{eq:pearsons-variance} for the variance under normal distribution assumptions or see \cite{bowley1928standard} for the general case). Moreover, estimating high-order moments using small sample sizes may be unstable~\cite{bishara@bjmsp2018}.

One statistic that we can compute is the standard error of the sampling distribution of the Fisher's Z correlation transformation~\cite{berry2000monte}, given by $SE_z = 1/\sqrt{n-3}$. While it assumes a bivariate normal distribution, its computation is simple and only depends on the known sample size $n$.
Despite the normality assumption, this statistic is asymptotically equivalent to the error found in our theoretical analysis of $1/\sqrt{n}$ (Section~\ref{sec:confidence-interval}). Thus, we expect it to work increasingly well as the sample size increases for any data distribution.

While we expect the Fisher's Z standard error to be accurate for large sample sizes, one drawback is that it assumes normality and does not take into account any information about the data distribution.
Given that real-world data is seldom normally distributed~\cite{micceri1989unicorn} and the actual data distributions are usually unknown, we are interested in calculating distribution-independent confidence interval bounds.
In this setting, non-parametric approaches such as bootstrapping are applicable. 
Bootstrapping only makes use of the samples generated by our sketches and does not assume any prior data distribution. 
While bootstrapping has been shown to have good performance for estimating confidence intervals for non-normal distributions~\cite{bishara@bjmsp2018}, it has the disadvantage of having a very high computational cost -- specially in settings like ours where it needs to be computed repeatedly over many columns.

To address the limitations of Fisher's Z (normality assumption) and bootstrapping-based (high computational cost) methods, in Section \ref{sec:confidence-interval}, we derive new confidence bounds using finite-sample concentration bounds for sums of independent random variables.
These bounds only depend on the maximum and minimum values in $X_{X\bowtie Y}$ and $Y_{X\bowtie Y}$. Since $X_{X\bowtie Y} \subseteq X$ and $Y_{X\bowtie Y} \subseteq Y$, the values in these columns lie strictly within the range of values in $X$ and $Y$. Thus, bounds on the range can be computed with a single pass over the columns (i.e., at the same time we construct the \corrsketchnormal).
Given that these bounds can be computed in constant time, there is essentially no additional computational overhead.

\subsection{Confidence Interval Bounds}
\label{sec:confidence-interval} 

As discussed in Section~\ref{sec:prelim_corr_estimation}, one challenge in deriving confidence intervals is that Pearson's correlation is known to be highly sensitive to individual samples. 
This prevents directly using e.g., a standard McDiarmid’s inequality to bound the accuracy of an estimate for the correlation.
Instead, we use individual Hoeffding inequalities to obtain confidence intervals for each individual component of the correlation estimator, and then apply a union bound to combine these results into an overall confidence interval.
This is similar to the approach used for constructing confidence intervals for the sample variance for non-Gaussian data in~\cite{bardenet2015concentration}.

% --------------------------------------

\paragraph{Analysis} 
\corrsketch estimates the Pearson's correlation between two columns $X_{X\bowtie Y}$ and $Y_{X\bowtie Y}$. Let $C_{low} = \min\{x \in X, y\in Y\}$ and $C_{high} = \max\{x \in X, y\in Y\}$ be upper and lower bounds in columns $X$ and $Y$, and let $C = C_{high} - C_{low}$. Let $A = X_{X\bowtie Y} - C_{low}$ and $B = Y_{X\bowtie Y} - C_{low}$. 
Since a constant shift does not affect Pearson's correlation, we note that the correlation between $X_{X\bowtie Y}$ and $Y_{X\bowtie Y}$ is equal to:
\vspace{-.1cm}
\begin{small}
$$
\rho = \frac{\langle A - \mu_A\vec{1}, B - \mu_B\vec{1} \rangle}{\|A - \mu_A\vec{1}\|_2 \|B - \mu_B\vec{1}\|_2 },
$$
\end{small}
\vspace{-.1cm}
\noindent where $\mu_A = \frac{1}{N}\sum_{i=1}^N A_i$ and $\mu_B = \frac{1}{N}\sum_{i=1}^N B_i$ are the  means of $A$ and $B$ and $\vec{1}$ is the all ones vector.
This expression is equal to
% \vspace{-.1cm}
\begin{small}
$$
{\rho} = \frac{\nu_{A,B} - \mu_A\mu_B}{\sqrt{\nu_A - \mu_A^2}\sqrt{\nu_B - \mu_B^2}},
$$
\end{small}
\vspace{-.1cm}
\noindent where
$ \nu_A = \frac{1}{N}\sum_{i=1}^N A_i^2 $,
$ \nu_B =\frac{1}{N}\sum_{i=1}^N B_i^2 $
and
$ \nu_{A,B} = \frac{1}{N}\langle A,B\rangle = \frac{1}{N}\sum_{i=1}^N A_i B_i $.
% \begin{align*}
% \nu_A &= \frac{1}{N}\sum_{i=1}^N A_i^2 & \nu_B &=\frac{1}{N}\sum_{i=1}^N B_i^2\\
% \nu_{A,B} &= \frac{1}{N}\langle A,B\rangle = \frac{1}{N}\sum_{i=1}^N A_i B_i
% \end{align*}
% 
% \noindent 
Let $a,b\in \R^n$ be vectors containing $n$ samples drawn uniformly without replacement from $A,B$. According to Theorem~\ref{thm:uniform}, the estimate to Pearson's correlation obtained via our sketches is equivalent to:
\vspace{-.2cm}
$$
r = \frac{\nu_{a,b} - \mu_a\mu_b}{\sqrt{\nu_a - \mu_a^2}\sqrt{\nu_b - \mu_b^2}},
$$
where $\mu_a = \frac{1}{n}\sum_{i=1}^n a_i$, $\mu_b = \frac{1}{n}\sum_{i=1}^n b_i$,  $\nu_a = \frac{1}{n}\sum_{i=1}^n a_i^2$, $\nu_b = \frac{1}{n}\sum_{i=1}^n b_i^2$, and $\nu_{a,b} = \frac{1}{n}\sum_{i=1}^n a_i b_i$.

\paragraph{Union Bound}
We want to compute a confidence interval for $\rho$, meaning that, for some specified $\alpha$ (e.g., $\alpha = .05$), our goal is to compute upper and lower bounds $\rho^{low}, \rho^{high}$ depending on our estimate $r$ such that:
% \vspace{-.1cm}
$
\Pr[\rho^{low} \leq \rho \leq \rho^{high}] \geq (1-\alpha).
$
To do so, we first compute the upper and lower bounds for $\mu_A$, $\mu_B$, $\nu_A$, $\nu_B$, $\nu_{A,B}$,
for all of the 5 parameters $r$ depends on. For any parameter $c$, we want: 
$
\Pr[c^{low} \leq c \leq c^{high}] \geq (1-\alpha/5).
$
For example, that $\Pr[\mu_A^{low} \leq \mu_A \leq \mu_A^{high}] \geq (1-\alpha/5)$.
We will discuss how to obtain these bounds shortly, but for now we show how they can be used to compute a confidence interval. First let:
\begin{align*}
    num_{low} &= \nu_{A,B}^{low} - \mu_{A}^{high}\mu_{B}^{high} & num_{high} &= \nu_{A,B}^{high} - \mu_{A}^{low}\mu_{B}^{low}
\end{align*}
\small
$$den_{low} = \sqrt{\max\left[0,\nu_A^{low} - (\mu_A^{high})^2\right]\cdot\max\left[0,\nu_B^{low} - (\mu_B^{high})^2\right]}$$
$$ den_{high} = \sqrt{\max\left[0,\nu_A^{high} - (\mu_A^{low})^2\right]\cdot\max\left[0,\nu_B^{high} - (\mu_B^{low})^2\right]} $$
\normalsize
Then set:
\vspace{-.2cm}
\begin{align}
\label{eq:conf_loew}
\rho^{low} = \begin{cases}
\frac{num_{low}}{den_{high}} &\text{if } num_{low}\geq 0\\
\frac{num_{low}}{den_{low}} &\text{if } num_{low}< 0
\end{cases}
\end{align}
\begin{align}
\label{eq:conf_upper}
\rho^{high} = \begin{cases}
\frac{num_{high}}{den_{low}} &\text{if } num_{high}\geq 0\\
\frac{num_{high}}{den_{high}} &\text{if } num_{high}< 0.
\end{cases}
\end{align}
By a union bound, we have $\Pr[\rho^{low} \leq {\rho} \leq \rho^{high}] \geq (1-\alpha)$.

\paragraph{Individual Parameter Bounds}
The next step is to obtain the confidence intervals for each of the parameters $\{\mu_A, \mu_B,\nu_A,\nu_B,\nu_{A,B}\}$, which we do using Hoeffding's concentration inequality for bounded random variables. This bound is usually stated for sampling \textit{with} replacement (i.e., independent random sampling), but Hoeffding proves in his original paper that it also holds for \textit{without} replacement sampling, which only gives better concentration. Specifically Theorems 2 and 4 in \cite{hoeffding} give:
%In fact, as discussed in \cite{bardenet2015concentration}, he proves a more general fact that allow essentially any concentration inequality (e.g. Chernoff or Bernstein bounds) to be proved for without replacement sampling, if the bound is known for independent random sampling. 

\begin{lemma}[Hoeffding's Inequality]
Let $X_1, \ldots, X_N$ be a set numbers bounded $\in [0,C]$ with mean $\mu_{X} = \frac{1}{N}\sum_{i=1}^N X_i$. Let $Y_1, \ldots, Y_n$ be drawn independently without replacement from this set. Then:
$$
\Pr \left[ \left|\mu_X - \frac{1}{n}\sum_{i=1}^n Y_i\right| \geq t \right] \leq 2e^{-2nt^2/C^2}.
$$
\end{lemma}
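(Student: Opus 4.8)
The plan is to prove the bound by the exponential-moment (Chernoff) method, treating one tail at a time and combining the two estimates with a union bound to account for the factor of $2$. Writing $S = \frac1n\sum_{i=1}^n Y_i$, it suffices to show $\Pr[S - \mu_X \ge t] \le e^{-2nt^2/C^2}$ and $\Pr[\mu_X - S \ge t] \le e^{-2nt^2/C^2}$; the lower-tail bound follows from the upper-tail bound applied to the numbers $C - X_1,\dots,C-X_N$, which again lie in $[0,C]$ and have mean $C-\mu_X$. So I would concentrate on the upper tail.

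For the upper tail, for every $\lambda>0$ Markov's inequality applied to $e^{\lambda(\sum_i Y_i - n\mu_X)}$ gives
\[
\Pr\!\Big[\textstyle\sum_{i=1}^n Y_i - n\mu_X \ge nt\Big] \;\le\; e^{-\lambda nt}\,\mathbb{E}\Big[e^{\lambda\sum_{i=1}^n (Y_i-\mu_X)}\Big].
\]
The workhorse is Hoeffding's lemma: if a random variable $Z$ satisfies $\mathbb{E}[Z]=0$ and $Z\in[a,b]$ almost surely, then $\mathbb{E}[e^{\lambda Z}]\le e^{\lambda^2(b-a)^2/8}$. I would establish this in the usual way — bound the convex function $e^{\lambda z}$ on $[a,b]$ by its chord, take expectations, and show that the logarithm of the resulting bound has second derivative at most $(b-a)^2/4$ via a Taylor expansion in $\lambda$. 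Since each centered draw $Y_i-\mu_X$ takes values in an interval of width $C$, the lemma yields $\mathbb{E}[e^{\lambda(Y_i-\mu_X)}]\le e^{\lambda^2 C^2/8}$.

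In the sampling-\emph{with}-replacement case the $Y_i$ are i.i.d., so the moment generating function factors and the right-hand side above is at most $e^{-\lambda nt + n\lambda^2 C^2/8}$; optimizing over $\lambda$ (the choice $\lambda = 4t/C^2$) gives $e^{-2nt^2/C^2}$, and the two-tail union bound finishes this case. For sampling \emph{without} replacement — the setting actually used in the lemma — the $Y_i$ are dependent and the MGF no longer factors. The one extra ingredient, and the main obstacle, is the domination result (Hoeffding's Theorem 4 / the reduction in Section~6 of \cite{hoeffding}): the moment generating function of the sum of an $n$-element without-replacement sample is pointwise at most that of the sum of $n$ i.i.d.\ draws from the same finite population. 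The cleanest self-contained route is Hoeffding's own argument: express $\sum_{i=1}^n Y_i$ through a uniformly random permutation of $X_1,\dots,X_N$, write it as an average of sums of $n$ independent draws obtained from a block decomposition of the permutation, and apply Jensen's inequality to the convex map $s\mapsto e^{\lambda s}$. Granting this domination, the same $\lambda$-optimization applies verbatim and the union over the two tails gives the stated inequality.

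I expect the Chernoff calculation and the proof of Hoeffding's lemma to be entirely routine; the genuinely delicate point is the without-replacement domination of the moment generating function, which is exactly why the statement is quoted directly from Theorems 2 and 4 of \cite{hoeffding}. If a fully self-contained proof is desired, I would isolate the permutation/convexity reduction as a short preliminary lemma and then run the argument above.
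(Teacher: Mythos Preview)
Your proposal is correct and outlines exactly the standard Chernoff--Hoeffding argument, including the key without-replacement domination step from Hoeffding's Theorem~4. Note, however, that the paper does not actually prove this lemma at all: it simply states the inequality and cites Theorems~2 and~4 of \cite{hoeffding}, so your write-up is strictly more detailed than what the paper provides.
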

Since $A$ and $B$ have values in $[0,C]$ by definition, each of the terms $\{\mu_A, \mu_B,\nu_A,\nu_B,\nu_{A,B}\}$ is the average of $N$ numbers, bounded between $[0,C]$ for $\mu_A, \mu_B$ and between $[0,C^2]$ for the others. 
Accordingly, we can apply Hoeffding's inequality to obtain a confidence interval with with $\alpha/5$, as required by our analysis above. For example, consider $\mu_A$. We have from Hoeffding's that:
$$
\Pr[\left|\mu_A - \mu_a\right| \geq t] \leq 2e^{-2nt^2/C^2}.
$$
For $2e^{-2nt^2/C^2} = \alpha/5$, we solve for $t = \sqrt{\ln(10/\alpha)\cdot C^2/2n}$ and can then set $\mu_A^{low} = \mu_a - t$ and $\mu_A^{high} = \mu_a + t$. As another example, consider $\nu_A$. From Hoeffding's,
$$
\Pr[\left|\nu_A - \nu_a\right| \geq t] \leq 2e^{-2nt^2/C^4}.
$$
For $2e^{-2nt^2/C^4} = \alpha/5$, we solve for $t' = \sqrt{\ln(10/\alpha)\cdot C^4/2n}$ and can then set $\nu_A^{low} = \nu_a - t'$ and $\nu_A^{high} = \nu_a + t'$.  
Final bounds for all five parameters are as follows:
\small
\begin{align*}
[\mu_A^{low},\mu_A^{high}] &= [\mu_a - t,\mu_a + t], & [\mu_B^{low},\mu_B^{high}] &= [\mu_b - t,\mu_b + t] 
\end{align*}
\begin{align*}
[\nu_A^{low},\nu_A^{high}] &= [\nu_a - t',\nu_a + t'], & [\nu_B^{low},\nu_B^{high}] &= [\nu_b - t',\nu_b + t'] 
\end{align*}
\begin{align*}
[\nu_{A,B}^{low},\nu_{A,B}^{high}] &= [\nu_{a,b} - t',\nu_{a,b} + t']
\end{align*}
\normalsize
where $t = \sqrt{\ln(10/\alpha)\cdot C^2/2n}$ and $t' = \sqrt{\ln(10/\alpha)\cdot C^4/2n}$.

\paragraph{Discussion}
The above analysis gives a simple procedure to compute valid confidence interval for correlation estimates: we first compute $t,t'$ as defined above, which only requires the desired confidence level $\alpha$, $C$ which is pre-computed, and the sample size $n$ (i.e., the number of rows in $L_{\langle X\bowtie Y \rangle}$. Then, we compute upper and lower bounds for each parameter, and plug into Equations~\eqref{eq:conf_loew} and \eqref{eq:conf_upper} to obtain a final confidence interval. 

Our analysis shows that, for a fixed $1 - \alpha$ confidence level, the confidence interval bounds depend on the sample size and up to the fourth-power of the range $C$ of the variables $A$ and $B$. This is in line with previous findings which point that the deviation of the Pearson's sample estimator depends on the fourth-moment of the variables \cite{yuan@jmva2000, yuan2005effect}. The accuracy of the bound also scales inversely with the square root of the sample size $n$, as expected.

Correlation is difficult to estimate whenever the population variance is low, because the estimator can become unstable (as these terms appear in the denominator of the equation for $r$). In particular, if either $\|A - \mu_A\vec{1}\|_2$ or $\|B - \mu_B\vec{1}\|_2$ is close to zero, then  any small deviation of $\|a - \mu_a\vec{1}\|_2$  or $\|b - \mu_b\vec{1}\|_2$ from the true values that we are trying to estimate would lead to a large difference in $\rho$ vs. $r$.

So, to get a better sense of the bounds, let us assume that, for both $A$ and $B$, $var(A) = \nu_A - \mu_A^2$ and $var(B) = \nu_B - \mu_B^2$ are $\geq  c$ for some constant $c$. From the analysis above, it is relatively easy to show that if we set $n = O\left(\frac{C^4\ln(1/\alpha)}{\epsilon^2 c^2}\right)$, we will obtain a final confidence interval with width $2\epsilon$ -- i.e., we can estimate the Pearson's correlation to accuracy $\pm \epsilon$. For data bounded by $C$, it would be natural for the variance $c$ to be on the order of $C^2$, in which case, the number of samples required is just $n = O\left(\frac{\ln(1/\alpha)}{\epsilon^2}\right)$. In other words, as the sample size $n$ grows (i.e., as our sketches size increases) our
% correlation
estimate converges with error roughly
$\frac{1}{\sqrt{n}}$. 
% $1/\sqrt{n}$.

\paragraph{Effect of Small Sample Sizes}
Note that when sample sizes are small, the bounds for the standard deviation terms ($\nu-\mu^2$) may become negative, which makes $den_{high}$ and $den_{low}$ to become zero and thus yield invalid bounds. To address this problem, while computing $\rho^{low}$ and $\rho^{high}$, we can replace the denominator of the Equations~\ref{eq:conf_loew} and \ref{eq:conf_upper} by the product of the sample standard deviation of the variables computed using the samples induced by the sketch join, i.e., we can set $den_{high} = den_{low} = \sqrt{\nu_a-\mu_a^2} \sqrt{\nu_b-\mu_b^2}$. 
We refer to these modified upper bound and lower bound, respectively, as $\rho^{high}_{HFD}$ and $\rho^{low}_{HFD}$.
Although these are not true probabilistic bounds, their resulting confidence interval length ($\rho^{high}_{HFD} - \rho^{low}_{HFD}$) still provides meaningful information to measure the estimation error risk. This is because the denominator term serves as a normalization factor of the covariance term in the numerator (for which we still compute the true bounds).
\subsection{Scoring Functions} 
\label{sec:scoring-functions}
%
% R1.O3. Reviewer complained that this section is "too brief" and the and we give no intuition about the functions. 
%
Based on the framework (Section \ref{sec:ranking-with-estimates}) and statistics (Sections \ref{sec:estimation-risk} and \ref{sec:confidence-interval}) described above, we finally derive four different scoring functions that optimize ranking for correlated column discovery.
Let $ci_{length} = \rho^{high}_{HFD} - \rho^{low}_{HFD}$ be the confidence interval length of a correlation estimate. Then, set $ci^{high}_{max}$ and $ci^{high}_{min}$ to be the maximum and the minimum confidence interval length in a ranked list, respectively. We define the following risk penalization factors:
\begin{small}
\vspace{-1em}
\begin{align*}
se_{z} = 1 - \frac{1}{\sqrt{\max(4, n) - 3}} && ci_{b} = 1 - \frac{\rho^{high}_{PM1} -\rho^{low}_{PM1}}{2}
\end{align*}
\vspace{-1em}
\begin{align*}
ci_{h} = 1 - \frac{ci_{length} - ci^{high}_{min} }{ ci^{high}_{max} - ci^{high}_{min} }
\end{align*}
\end{small}
Recall from our framework that each of these factors assume values in the range $[0,1]$ and when they are equal to $1$, the error risk is the minimum possible.
The equations above are direct applications of the statistics introduced in Sections \ref{sec:estimation-risk}-\ref{sec:confidence-interval}, i.e., Fisher's Z transformation standard error, Bootstrap CI, and our Hoeffding's~CI.
By plugging them into Equation~\ref{eq:risk-aversion-scoring}, we have (in addition, we also consider a no-penalization factor in $s_1$):
\begin{align*}
& s_1 = r_p,           &   s_2 = r_p*se_{z}, \\ 
& s_3 = r_b*ci_{b},    &   s_4 = r_p*ci_{h},
\end{align*}
where $r_p$ and $r_b$ are the absolute of the correlation estimated using, respectively, the Pearson's sample estimator and the PM1 bootstrap estimator~\cite{wilcox1996confidence}. We also use PM1 for the confidence intervals in $ci_b$.

\section{Experimental Evaluation}
\label{sec:experiments}

We performed extensive experiments using both synthetic and real-world \datasets to evaluate the effectiveness of \corrsketch and the proposed ranking strategies. 

\subsection{Datasets}

We used three different data collections:
%The data used in our evaluations come from three different sources. The
\textit{Synthetic Bivariate Normal}, which is  synthetically generated and follows  a pre-defined, well-known data distribution;
the \textit{World Bank's Finance}~\cite{wbopendata-finance} and the \textit{NYC Open Data}~\cite{nycopendata} collections, which contain real-world datasets, currently published in open data portals. 
% this is not crucial to say
%hosted using the \textit{Socrata} data platform~\cite{socrata}.
%\footnote{\url{https://dev.socrata.com}}
We used snapshots of these repositories collected in September 2019 using Socrata's REST API~\cite{socrata}. All datasets were stored in plain CSV text files, and we used the \textit{Tablesaw}
%\footnote{\url{https://github.com/jtablesaw/tablesaw}} 
library~\cite{tablesaw} to automatically parse and detect the basic data types for each column. 
%Next, we describe the characteristics of each of these datasets in detail.

\paragraph{\wbfdataset (\wbfdatasetshort)}
This collection contains 64 datasets (tables) related to the World Bank's Finances~\cite{wbopendata-finance}.
%\footnote{\url{https://finances.worldbank.org}}
%Or data snapshot includes 64 different datasets.
%with a varying number of columns. 
There is missing data in several columns and some columns contain large monetary values. From each table, we extracted all possible pairs of categorical and numerical data columns $\langle K_X, X\rangle$, out of which we generated all possible unique 2-combinations of columns pairs --  9,979,278 pairs of column pairs, i.e., $\langle K_X, X\rangle$, $\langle K_Y, Y\rangle$.
%, for our experiments.

\paragraph{\nycdataset (\nycdatasetshort)}
The tables from this \dataset contain data published by New York City agencies and their partners~\cite{nycopendata}.
%\footnote{\url{https://data.cityofnewyork.us}}
Our snapshot includes 1,505 different \datasets.
%with different numbers of columns. 
Using the same process described above for the \wbfdataset collection, we generated $12,497,500$ pairs of column key-value pairs.

\begin{figure}[t]

% Line 1
\parbox{.45\linewidth}{
\begin{subfigure}{1\linewidth}
    \centering
    \includegraphics[width=1\linewidth]{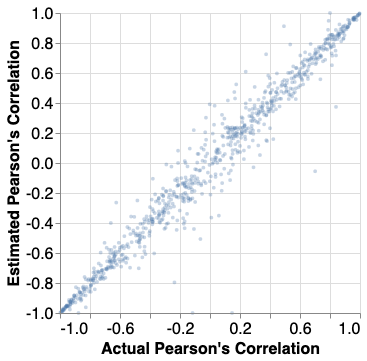}
    \caption{SBN dataset, $n\geq3$.}
\end{subfigure}
}
\hspace{0.05\linewidth}
\parbox{.45\linewidth}{
\begin{subfigure}{1\linewidth}
    \centering
    \includegraphics[width=1\linewidth]{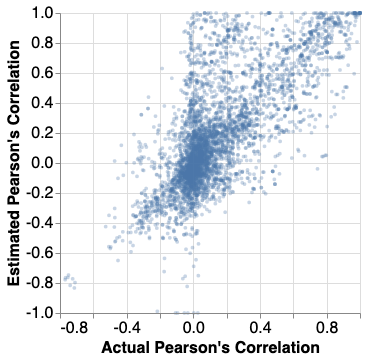}
    \caption{WBF dataset, $n\geq3$.}
\end{subfigure}
}

% Line 2

\parbox{.45\linewidth}{
\begin{subfigure}{1\linewidth}
    \centering
    \includegraphics[width=1\linewidth]{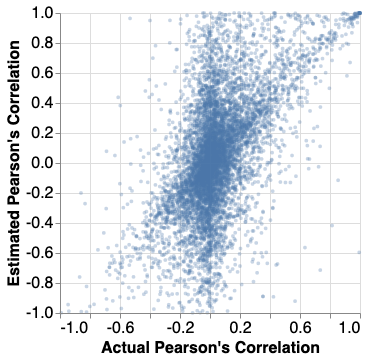}
    \caption{NYC dataset, $n\geq3$}
\end{subfigure}
}
\hspace{0.05\linewidth}
\parbox{.45\linewidth}{
\begin{subfigure}{1\linewidth}
    \centering
    \includegraphics[width=1\linewidth]{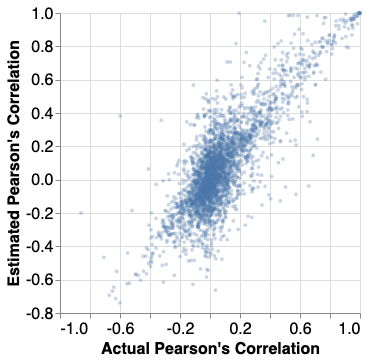}
    \caption{NYC dataset, $n\geq20$}
\end{subfigure}
}
% \vspace{-.15cm}
\caption{Estimation errors significantly vary for different samples sizes and different datasets with different data distributions. (a), (b), and (c) show the deviations of all column pairs for 3 different datasets. (d) shows the estimates from (c) after filtering out estimates that use fewer than 20 samples.}
\vspace{-.6cm}
\label{fig:scatterplots-all-datasets}
\end{figure}

\paragraph{\sbndataset (\sbndatasetshort)} This dataset was generated by creating $t$ tables consisting of $n$ tuples $\langle k, x_k, y_k \rangle$, where $k \in K$ is a random unique string, and $x_k \in X$ and $y_k \in Y$ are real numbers drawn from a bivariate normal distribution with mean $\mu=0$. The covariance of the column vectors $X$ and $Y$ was chosen in such a way that the Pearson's correlation coefficient between $X$ and $Y$ was approximately equal to a given parameter $r_{XY}$. We then created $t$ pairs of tables $\mathcal{T}_X = \langle K_X, X \rangle$ and $\mathcal{T}_Y = \langle K_Y, Y \rangle$. Finally, we reduced the size of the table $\mathcal{T}_Y$ from $n$ to $n'$ by selecting a uniform random sample of size $n'= n*c$, where $c$ is a random real number in the range $(0, 1)$ indicating the join probability between $X$ and $Y$. We set the number of table pairs 
$t=3000$. For each table pair, we set $n$ to be a random number drawn uniformly in the range~$(0, 500000)$, and the correlation $r_{XY}$ is drawn uniformly at random from~$(-1, 1)$.

% \vspace{-.3cm}
\subsection{Correlation Estimation Accuracy}

%In this section, we focus on the effectiveness of our proposed sketching method (\textbf{RQ1}). In order to do so, 

To study the effectiveness of our  sketching method, we compare the estimated correlations against the actual correlations. %estimates with the actual correlations computed over the join of the input %tables. 
For each pair of columns, $\langle K_X, X\rangle$ and $\langle K_Y, Y\rangle$, we first build their  correlation sketches $L_{\langle K_X, X \rangle}$ and $L_{\langle K_Y, Y \rangle}$, and then compute their correlation estimates $\hat{r}_{\langle X \bowtie Y \rangle}$. Then, we compare the estimates with the actual column correlations $r_{X \bowtie Y}$ computed using the (complete) join of columns $X_{X \bowtie Y}$  and $Y_{X \bowtie Y}$. Due to space limitations, we show the results for Pearson's correlation. We discuss the performance of other estimators in Section~\ref{sec:exp-estimators}.

Figure~\ref{fig:scatterplots-all-datasets} shows scatterplots for the estimates computed for all three \datasets against the actual Pearson's correlation values, using sketches of size 256. We can clearly see that \corrsketch produces quite accurate results for the \sbndatasetshort \dataset (Figure~\ref{fig:scatterplots-all-datasets}a), which contains only data drawn from a bivariate normal distribution: the estimates are concentrated close to their actual correlation values with only a few points deviating from the actual correlation values.

For the \nycdatasetshort (Figure~\ref{fig:scatterplots-all-datasets}b) and \wbfdatasetshort (Figure~\ref{fig:scatterplots-all-datasets}c) datasets, which contain real-world data from unknown distributions, as expected, there are more incorrect predictions. 
We can observe that for many points with actual correlation equal to $0$, the correlation is  overestimated -- see the vertical line around the value zero in the x-axis. 
This happens because some of the joins computed from the sketches can be too small, leading to this inaccuracy. Note that Figures~\ref{fig:scatterplots-all-datasets}a, b, and c reflect estimates computed with joins (sketch intersection) that contain as few as 3 tuples ($n\geq3$).
Nonetheless, we can still observe a large concentration of points around the diagonal line, suggesting that \corrsketch is effective.

To better understand the behavior of the sketches with respect to sample size, we also plot in Figure~\ref{fig:scatterplots-all-datasets}d the
results for the \nycdatasetshort data showing only join samples with size at least $20$.
%
%Finally, in Figure~\ref{fig:scatterplots-all-datasets}d, we show the \nycdatasetshort dataset, however, we only the show estimates sample size of at least $20$. 
The plot shows that indeed, for larger sample sizes, the behavior is more similar to that of the SBN dataset -- the points are more concentrated closer to the diagonal, indicating the estimates are more accurate.

This provides evidence for the issue we discussed in Section~\ref{sec:ranking}: in large table collections, there are often many more poorly-correlated tables than well-correlated ones, and estimation errors may lead to a potentially large number of false positives.
This underscores the importance of having effective ranking functions
to help users focus their attention on tables that are more likely to be correlated. In this case, for example, a ranking function that prioritizes estimates computed from larger join samples is likely to be effective at pruning
these false positive results.

\begin{figure}[t]
\includegraphics[width=.9\linewidth]{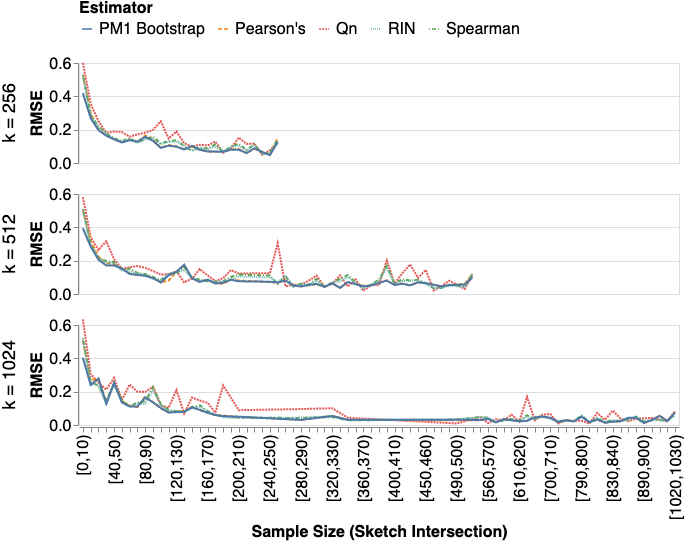}
\vspace{-.3cm}
\caption{The sample size (sketch intersection) has an impact on RMSE. As the sketch intersection size increases, the RMSE decreases in the \nycdatasetshort dataset. Here, the $k$ parameter (row) denotes the maximum sketch size (number of minimum values kept in the sketch).}
\vspace{-.6cm}
\label{fig:rmse-vs-sketch-size}
\end{figure}

\begin{table*}[t]

\begin{small}
\parbox{.245\linewidth}{
\begin{subfigure}{1\linewidth}
\centering
    \begin{tabular}{lrl}
        \toprule
               ranker &  score &       \% \\
        \midrule
         $r_p*ci_{h}$ &  0.529 &  193.2\% \\
         $r_b*ci_{b}$ &  0.516 &  185.9\% \\
         $r_p$ &  0.507 &  180.9\% \\
         $r_p*se_{z}$ &  0.420 &  133.1\% \\
         $jc$ &  0.180 &  0.0\% \\
         $\hat{jc}$ &  0.172 &  -4.8\% \\
         random &  0.161 &  -10.8\% \\
        \bottomrule
    \end{tabular}
    \caption{MAP ($r > .75$)}
\end{subfigure}
}
\parbox{.245\linewidth}{
\begin{subfigure}{1\linewidth}
\centering
    \begin{tabular}{lrl}
    \toprule
           ranker &  score &       \% \\
    \midrule
     $r_p*se_{z}$ &  0.472 &  102.1\% \\
     $r_p*ci_{h}$ &  0.467 &  99.8\% \\
     $r_p$ &  0.452 &  93.2\% \\
     $r_b*ci_{b}$ &  0.428 &  83.2\% \\
     $\hat{jc}$ &  0.239 &  2.3\% \\
     $jc$ &  0.234 &  0.0\% \\
     random &  0.202 &  -13.7\% \\
    \bottomrule
    \end{tabular}
    \caption{MAP ($r > .50$)}
\end{subfigure}
}
\parbox{.245\linewidth}{
\begin{subfigure}{1\linewidth}
\centering
    \begin{tabular}{lrl}
    \toprule
           ranker &  score &      \% \\
    \midrule
     $r_b*ci_{b}$ &  0.714 &  51.5\% \\
     $r_p*ci_{h}$ &  0.705 &  49.5\% \\
     $r_p$ &  0.699 &  48.4\% \\
     $r_p*se_{z}$ &  0.689 &  46.2\% \\
     random &  0.481 &  2.1\% \\
     $\hat{jc}$ &  0.480 &  1.8\% \\
     $jc$ &  0.471 &  0.0\% \\
    \bottomrule
    \end{tabular}
    \caption{nDCG@5}
\end{subfigure}
}
\parbox{.245\linewidth}{
\begin{subfigure}{1\linewidth}
\centering
    \begin{tabular}{lrl}
    \toprule
           ranker &  score &      \% \\
    \midrule
     $r_b*ci_{b}$ &  0.845 &  17.7\% \\
     $r_p$ &  0.843 &  17.5\% \\
     $r_p*ci_{h}$ &  0.841 &  17.3\% \\
     $r_p*se_{z}$ &  0.832 &  15.9\% \\
     $\hat{jc}$ &  0.726 &  1.2\% \\
     random &  0.724 &  0.9\% \\
     $jc$ &  0.717 &  0.0\% \\
    \bottomrule
    \end{tabular}
    \caption{nDCG@10}
\end{subfigure}
}
\end{small}

\caption{Ranking evaluation scores in terms of MAP and nDCG. The ``\%'' column denotes relative improvement over $jc$.}
\vspace{-.6cm}
\label{tab:ranking-metrics}
\end{table*}

\vspace{-.2cm}
\subsection{Exploring Different Correlation Estimators} \label{sec:exp-estimators}
We studied the performance of different correlation estimators:

\paragemph{(1) Pearson's Sample Correlation:} Computes the correlation using the formula defined in Equation \ref{eq:pearson-sample}.
% \begin{itemize}

\paragemph{(2) Spearman's Rank Correlation:} Let $r(x)= r_x$ where $r_x = 1$ for the smallest $x$, $r_x = 2$ for the second smallest $x$, and so on. The numeric column values are  transformed using $r(x)$ and then the Pearson's correlation over the transformed values is computed.

\paragemph{(3) Rank-based Inverse Normal (RIN):} Similarly to Spearman's, a transformation function is applied before computing the Pearson's correlation. Following  \cite{bishara-hittner@epm2015}, we employ the \textit{rankit} function \cite{bliss1967statistics} 
% which is able more accurately produce the even moments of a normal distribution \cite{soloman2009impact}, and is
which is defined as: $h(x) = \Phi^{-1} \left(\frac{r(x)-1/2}{n} \right),$ where $\Phi$ is the inverse normal cumulative distribution function.

\paragemph{(4) $Q_n$ correlation:} Computes the correlation using as modified formula of Pearson's correlation and the robust $Q_n$ scale estimator. For more details, see \cite{shevlyakov@robust-corr-book}.

\paragemph{(5) PM1 Bootstrap:} Performs repeated re-sampling with replacement of the data and recomputes the correlation using the Pearson's sample correlation estimator. The average of all re-samples is then used as an estimate. Instead of drawing a fixed large number of re-samples (say 10.000), we stop the re-sampling when the probability of changing the mean by more than $0.01$ falls below $0.05\%$.

Most computed correlation estimates using sketches were compared to their corresponding population correlations (i.e., including the transformations of the population data when applicable). The only exception was the \textit{PM1 Bootstrap} estimator, which
is compared to the population's Pearson's correlation that it intends to estimate.

%\paragraph{Effect of Correlation Estimators and Sketch Size}
%
Correlation estimators differ in their sample size requirements and sensitivity to the data distribution~\cite{bonett@psychometrika2000}. To better understand how the correlation estimation accuracy changes when we vary the estimator and the amount of storage space used by the sketch (determined by the maximum sketch size), we plot the root mean squared error (RMSE) for different choices of correlation estimators and maximum sketch sizes in Figure~\ref{fig:rmse-vs-sketch-size}. We can see a trend: for all estimators and maximum sketch sizes, the RMSE decreases as the intersection size between the sketches increases. The RMSE stabilizes roughly at $0.1$. While the different estimators display similar trends, the plot also shows that some estimators are less robust (see e.g., the spikes in the line for $Q_n$).

\vspace{-.2cm}
\subsection{Correlated Column Ranking}
To better understand the effectiveness of the proposed scoring functions, we use the NYC data collection which contains the largest number of tables. For each pair of columns $\langle K_X, X \rangle$ in the collection, we retrieved all other  joinable columns $\langle K_Y, Y \rangle$. Then, we ranked the list of retrieved columns using the scoring functions described in Section~\ref{sec:ranking}. As baselines we use: 1) a \textit{random} scoring function, which assigns random scores in the range $[0,1]$ drawn from a uniform distribution; 2) the \textit{exact} Jaccard Containment ($jc$) similarity computed using the complete data after the join; and 3) the JC similarity estimated ($\hat{jc}$) using our correlation sketches.
Even though these baselines do not take correlation into account, we use them as a point of comparison, since they represent existing methods for retrieving joinable tables.

We use two well-known measures to compare the scoring functions: mean average precision (MAP) and normalized discounted cumulative gain (nDCG)~\cite{jarvelin2002cumulated}. nDCG supports graded relevance judgments, i.e., each retrieved column can receive different scores depending on their relevance (the absolute value of the correlation, in our case) and position on the list. In contrast, MAP supports binary judgment, and relies on thresholds to decide which columns are relevant (say, $r>.5$).

To evaluate different aspects of the ranking, we compute MAP for the whole ranked list using different thresholds for correlation relevance: $r>0.50$ and  $r>0.75$. For nDCG, which has a tendency of assigning higher scores, we compute the metrics only over the top-$k$ search results, where $k=5$ and $k=10$. 
Table~\ref{tab:ranking-metrics} summarizes the results. The first trend we observe is that the ranking produced using the JC similarity scores attains scores similar to random ordering. This  confirms our hypothesis that JC is not well suited for ranking the results of join-correlation queries and also suggests that highly-correlated columns can have different levels of JC similarity.

Note that all correlation-based ranking functions (Section~\ref{sec:scoring-functions}) show  significant improvements over the baselines. The ranking functions based on our Hoeffding bounds ($r_p*ci_h$) attain scores that are either better or very close to the bootstrap-based ranking function. This is specially interesting due to the fact that the Hoeffding-based CI 
can be computed in constant time, in contrast to the bootstrap method that requires many re-samples and the computation of correlations for these samples (typically, around between 1,000 and 10,000 iterations are used~\cite{bishara@bjmsp2018, bishara@brm2017}). In other words, we derive rankings that are comparable to those produced by the bootstrapping at a fraction of the cost.
Moreover, the MAP scores suggest that the Hoeffding-based scoring is particularly effective at avoiding false positives with correlation above $r>0.75$.

\begin{figure}[t]
\includegraphics[width=1\linewidth]{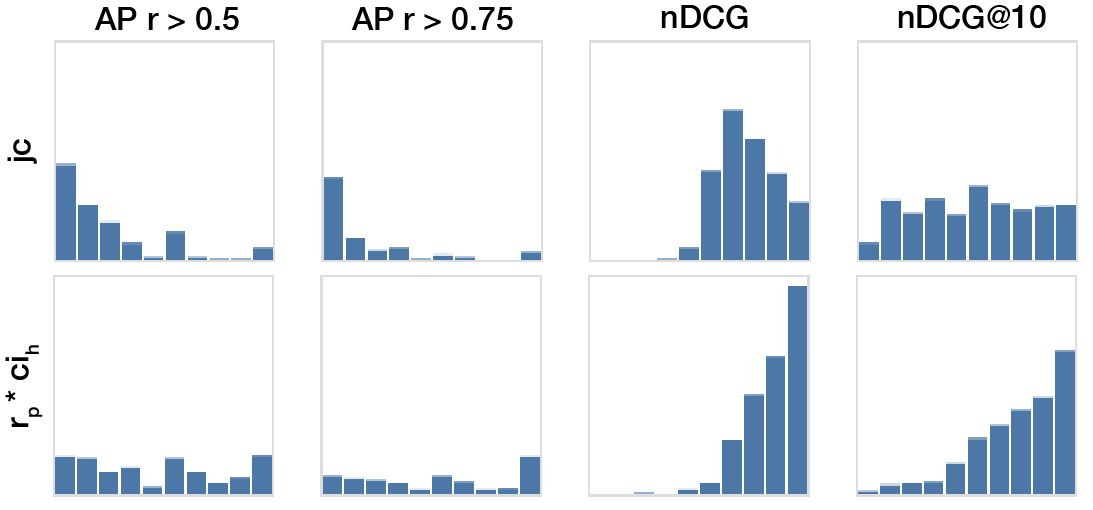}
\vspace{-0.3cm}
\caption{Distribution of the evaluation metric scores for different scoring functions. $x$-axis shows slices of the metric range $[0,1]$. Each bar corresponds to a slice of width $0.1$. The $y$-axis show the number of queries that fall in each slice.}
\vspace{-0.6cm}
\label{fig:ranking-scores-distributions}
\end{figure}

Note that Table~\ref{tab:ranking-metrics} shows only average scores over all queries. To get a better sense for the improvements over all queries, we plot the distribution of evaluation scores in Figure~\ref{fig:ranking-scores-distributions}.
The histograms show the number of queries for different metric score values. Rows display the scoring functions: JC similarity (first row) and the Hoeffding-based scoring (second row). We can clearly see that, for all metrics, the scores distributions shift from left (bad scores) to  right (good scores), indicating improvements in the whole metric range. It is also clear that, specially when considering the nDCG metric, most queries have very good scores (i.e., close to the optimal).

% \vspace{-.2cm}
\subsection{Running Time}

% An important motivation for our work is to provide users with the ability
% to efficiently discover correlated datasets. 
\paragraph{Join-Correlation Estimation}
To assess the efficiency of our approach and suitability for use
in dataset search engines, we compare the running times of joins and correlation estimation computations using the sketches
% and the full data.
against the times required to execute the 
same computation over the full data. 
Since \corrsketch creates
fixed-sized synopses for arbitrarily-sized datasets, it is natural to expect a big performance impact due to the reduction in the complexity of the problem -- from potentially large table sizes to a small constant factor (sketch size).
The results are summarized in Table~\ref{table:running-time}. Due to space limitations we only include the times for joins and correlation computations using Pearson's and Spearman's coefficients. The results confirm that, with sketches, queries can be evaluated orders of magnitude faster than by using the full data, and given that the sketch size is fixed, their running time is more predictable.
Note that, for this comparison, we assume the data is loaded in memory. However, for large datasets, the costs associated with using the full data would be even higher due to the cost of reading data from disk or transferring it over a network.

\paragraph{Query Evaluation}
We also assessed the performance of evaluating join-correlation queries. 
For this experiment, we extracted all column pairs $\langle K_X, X \rangle$ from all 1,505 tables in the NYC dataset and randomly split them into two distinct sets, which we denote as \textit{query set} and \textit{corpus set}.
Next, we set the maximum sketch size to $1024$ and
built an inverted index for all tables of the corpus set using
% an inverted index and standard query processing algorithms 
a standard indexing library~\cite{apache-lucene}.
% In this experiment, we randomly split all column pairs from NYC Open data into two column-sets: the first is used as queries and the second is used to build an inverted index of correlation sketches with a maximum of 1024 entries.
% 
Finally, we used all pairs from the \textit{query set} to issue queries against the index and measured the query execution time. We observed that the running times for 94\% of queries are under 100 ms and \texttildelow 98.5\% of queries take under 200 ms. These times include retrieving the top-100 columns by key overlap, reading sketches from the index, and re-sorting all columns by estimated correlation. 
These preliminary results are promising and suggest that our approach can provide interactive response times for join-correlation queries over large data collections.

\begin{table}[t]
\centering
\begin{small}
\begin{tabular}{lrrrrrr}
% \small
\toprule
\textbf{} & \multicolumn{3}{c}{\textbf{Full data}} & \multicolumn{3}{c}{\textbf{Sketch}} \\
{percentiles} &  join &  $r_s$ &  $r_p$ &  join &    $r_p$ &    $r_s$ \\
\midrule
% count & 247793.000 & 247793.000 & 247793.000 & 247793.000 & 247793.000 & 247793.000 \\
mean  & 42.219 & 8.494 & 0.240 & 0.026 & 0.000 & 0.004 \\
std. dev.   & 367.696 & 134.357 & 9.314 & 5.618 & 0.042 & 0.279 \\
% min   & 0.004 & 0.001 & 0.000 & 0.000 & 0.000 & 0.000 \\
% 50\%   & 0.187 & 0.126 & 0.005 & 0.001 & 0.000 & 0.001 \\
75\%   & 0.231 & 0.141 & 0.005 & 0.003 & 0.000 & 0.002 \\
90\%   & 7.038 & 0.154 & 0.011 & 0.006 & 0.001 & 0.004 \\
99\%   & 1360.605 & 29.583 & 0.385 & 0.012 & 0.003 & 0.013 \\
99.9\% & 4021.838 & 2731.154 & 51.278 & 0.021 & 0.007 & 0.033 \\
% max   & 23309.730 & 4296.147 & 1786.879 & 1657.586 & 16.021 & 72.745 \\
\bottomrule
\end{tabular}
\end{small}
\vspace{0.1cm}
\caption{Running times (in milliseconds) for computing joins and correlations using full data and sketches. Using the full data, queries can take orders of magnitude more time than when using the sketches. $r_s$ denotes the Spearman's estimator and $r_p$ denotes Pearson's estimator.}
\label{table:running-time}
\vspace{-.9cm}
\end{table}

\section{Related Work}
\label{sec:related}

Our work builds on
a large body of research that includes sampling and sketching algorithms used to handle massive data streams~\cite{cormod@foundtrends2012}, join size estimation~\cite{harmouch@vldb2017cardest-survey}, 
%developed to support efficient database querying capabilities 
and the statistics literature on
correlations.
% correlation estimation.
Also related to our work is  recent research on dataset search~\cite{noy@www2019, zhang@sigir2018}, and methods that support data augmentation queries which, given a query dataset, find datasets that can be joined or concatenated \cite{zhu@sigmod2019, nargesian@vldb2018, zhu@vldb2016-lsh-ensemble, fernandez@icde2019, yang@icde2019} or 
%automatically augmented with new rows or columns or 
that contain related entities~\cite{lehmberg@jws2014, zhang2020web}. %There is growing interest on dataset search methods which aims to facilitate discovery and search of datasets on the Web \cite{noy@www2019, zhang@sigir2018}. 
%In what follows, we review in the details each of these research areas.

% In the following, we discuss sampling and sketching algorithms used to handle massive datasets. Then, we discuss algorithms for join size estimation in databases and data integration methods used for discovering datasets that can be joined or concatenated to an input query dataset. We then discuss recent work on dataset search and, finally, review the literature on correlation estimation.

\paragraph{Cardinality Estimation}
Several methods have been proposed that summarize massive datasets into succinct data structures to support approximate queries using bounded memory~\cite{cormod@foundtrends2012}. 
%For an overview of this topic, we refer the reader to \cite{cormod@foundtrends2012}.
A widely-studied problem in this area is cardinality estimation, i.e., estimating the number of distinct elements in a set.
%Multiple taxonomies have been proposed to organize cardinality estimation algorithms. The first categorization splits algorithms into two broad categories: sampling and sketch-based algorithms. 
Approaches to this problem can be broadly classified into two groups: 
sampling and sketch-based algorithms. 
\textit{Sampling algorithms} avoid scanning the full dataset and estimate cardinalities based  on data samples. These algorithms have been criticized for their inability to provide a good error guarantees~\cite{charikar@pods2000error-guarantees,haas@vldb95}.
In contrast, \textit{sketch-based algorithms} fully scan the dataset once, compute hashes of the items, and create a sketch that can be used to compute cardinality estimates~\cite{bar-yossef@random2002,beyer@sigmod2007, duffield@jacm2007, cohen@vldb2008, beyer@cacm2009, dasgupta@icdt2016,  ting2016towards, cohen@kddd2017-minimal-variance}. 
%This category has found wide success, and several algorithms have been proposed in the previous decades \cite{bar-yossef@random2002,beyer@sigmod2007, duffield@jacm2007, cohen@vldb2008, beyer@cacm2009, dasgupta@icdt2016,  ting2016towards, cohen@kddd2017-minimal-variance}.
Harmouch and Naumann~\cite{harmouch@vldb2017cardest-survey} categorized algorithms these into different  families: \textit{count trailing 1s}, \textit{count leading 0s}, \textit{$k^{th}$ minimum values}, and \textit{linear synopses}. Each of these families has their best algorithms which exhibit different trade-offs, which were studied in detail in~\cite{harmouch@vldb2017cardest-survey}.
%See \cite{harmouch@vldb2017cardest-survey} for a recent survey and experimental %evaluation of several algorithms.

While the best algorithms based on counting trailing 1s an 0s (such as HyperLogLog (HLL) \cite{flajolet2007hyperloglog})
% is this the correct citation heule2013hyperloglog
%and HyperLogLog++ \cite{TODO:hyperloglogcpp}) 
are able to provide better accuracy per bit, 
it is not clear how they can be extended to estimate cardinalities of value-sets that satisfy arbitrary properties not known apriori, before the sketch is built. %
%In Section~\ref{sec:preliminaries}, we provided an overview of algorithms from the $k^{th}$ minimum values (KMV) family. 
%
Algorithms from the $k^{th}$ minimum values (KMV) family, on the other hand, can be extended for this purpose
% to estimate cardinalities of value-sets that satisfy any property 
at the cost of storing the sample identifiers in addition to their hashed values~\cite{dasgupta@icdt2016-theta-sketches}.  The reason why HLL-like algorithms cannot estimate such properties is that, HLL does not maintain any sample of identifiers from the data. For this same reason, HLL sketches are not
suitable for join-correlation sketches, which require alignment of numeric values based on their join key values.
\corrsketch builds on the KMV family and derives a new sketch that is able to align numeric values based on their key hashed values to reconstruct a random samples that can ultimately be used for estimating correlations.
% using the paired values resulting of the join of KMV-like sketches. 
While we focus on correlations in this paper, the sketches we propose can be used to compute any statistics that are based on paired numeric values (e.g., distance correlations~\cite{szekely@tas2007} and the entropy-based mutual information).

\paragraph{Join Size Estimation}
%In relational databases there are multiple ways of executing a given query that yields the same output. However, different execution plans may have dramatically different performances. Before executing queries, database query optimizers try to cheaply estimate query plan costs so as to minimize the total execution time required. A key problem in this area is accurately estimating the cardinality of the intermediate table sizes.
% 
The ability to accurately estimate join sizes is crucial for query optimizers.
Acharya~et~al. \cite{acharya@sigmod1999joinsynopses} have established very early results on the hardness of sampling over joins, in the general foreign-key setting, by showing that it is generally not possible to create a uniform random sample by simply joining uniform random samples from each independent relation.
Since then, approaches have been developed to address this problem, which can be broadly grouped in the following categories: sampling, sketching, indexing, and machine learning.

% \paragraph{Sketching-based}
\textit{Sketching-based algorithms} \cite{rusu@tdbs2008} leverage the techniques mentioned above to create sketches on the join attribute and ignoring all the other attributes.
These algorithms usually provide accurate estimates on the join size on queries without selection predicates.
In order to support queries with predicates, however,
% However, because sketches generally do not support handling arbitrary predicates,
% to be able to handle predicates this approach needs to transform 2-way joins into 3-way joins 
2-way joins need to be transformed into 3-way joins
by creating a additional sketch to represent the predicate~\cite{chen@sigmod17}, which substantially deteriorates the estimation quality~\cite{vengerov@vldb2015}.
%  Sketch-based: can find more details and references in \cite{chen@sigmod17}.

% \paragraph{Sampling-based}
\textit{Sampling-based algorithms}  keep a sample of the tuples, apply predicates on the sample to select the tuples that satisfy them, and finally estimate the join size using only these tuples. 
% that satisfy the predicates. 
Multiple tuple-sampling strategies have been proposed~\cite{acharya@sigmod1999joinsynopses, estan@icde2006endbiased, vengerov@vldb2015, chen@sigmod17}. Recent works~\cite{vengerov@vldb2015,chen@sigmod17} incorporate ideas similar to the strategy used in this paper and in KMV sketches family: they use a random hashing function to map join values to the unit range and then select tuples based on some selection strategy. For instance, the strategy adopted by the correlated sampling algorithm~\cite{vengerov@vldb2015} is equivalent to the strategy of the G-KMV sketch~\cite{yang@icde2019}, where tuples are selected if the hashed keys are smaller than a probability threshold. In contrast, \corrsketch includes tuples in the sketch up to a fixed number, which avoids assigning too much space to large datasets and leads to more predictable performance for query evaluation. % in detriment of smaller tables.
In the future, we would like to explore trade-offs between the fixed-sized sketches we use and variable-sized sketches.
% produced by G-KMV.

% \paragraph{Index-assisted sampling}

\textit{Index-assisted algorithms}~\cite{leis@cidr2017, 
lipton@sigmodrec1990, ganguly@sigmodrec1996} rely on indexes to perform the sampling. The use of indexes allows algorithms to retrieve only tuples that are relevant to the query, thus avoiding worst-case scenarios where no samples are available to perform estimations. A drawback of this approach is that indexes are not always available, and repeated index look-ups become expensive when the index does not fit into main memory.
% SAVE SPACE
% These algorithms can also be complementary to sampling-based algorithms: sampling can be used to select a small sample that fits in memory, and index-based algorithms can be used to speed-up the estimation.

%% \paragraph{Machine-learning based cardinality estimation}
When a join involves keys of multiple tables and highly-selective predicates, it becomes harder to estimate the join size because the join keys intersection gets increasingly smaller. Recently proposed approaches based on \textit{machine-learning} \cite{kipf@cidr2019, yang-deep-card-est@vldb2019} are able to improve the estimation performance for highly-selective worst-case queries.

%Save space
%\hide{ For instance, Kipf~et.~al. \cite{kipf@cidr2019} proposed a supervised learning approach that automatically generates training data for training a multi-set convolutional network (MS) model for estimating cardinality.
% (VLDB'2019) Deep Unsupervised Cardinality Estimation
%
%Yang~et.~al.~\cite{yang-deep-card-est@vldb2019} proposed an unsupervised learning approach based on auto-regressive models to train a high-capacity selectivity estimator that approximates the joint data distribution in its full form, without any column independence assumptions.}

There are key differences between our work and approaches to
join-size estimation algorithms both in terms of their goals and challenges.
Notably, \corrsketch does not need to deal with selection conditions, and one-many as wells as many-many relationships can be reduced to one-one joins (Section~\ref{sec:correlation-sketches}), allowing us to create uniform random samples of the resulting join table.

\paragraph{\Dataset Discovery and Search}
Given an input query \dataset, \dataset discovery methods have been proposed to find related \datasets that can be integrated via relational operations \textit{join} or \textit{union}.
% Recently, there has been a growing interest in methods for \dataset discovery and search. Given an input query \dataset, they implement techniques to search for related \datasets that 
% can be integrated  via relational operations \textit{join} or \textit{union}.
%These methods, however, do not focus on correlations between numerical columns -- rather, 
They use the notions of Jaccard similarity and containment,  
along with variations, to compute overlaps in \dataset keys and other categorical attributes. These overlaps are 
then used to estimate how \textit{joinable} or \textit{unionable} 
%(i.e., \textit{how related}) 
two \datasets are. 
JOSIE~\cite{zhu@sigmod2019} provides an exact 
solution for finding joinable \datasets in a data lake.  LSH Ensemble~\cite{zhu@vldb2016-lsh-ensemble},  GB-KMV~\cite{yang@icde2019} and Lazo~\cite{fernandez@icde2019} propose approximate approaches to 
the same problem. 
Nargesian et al.~\cite{nargesian@vldb2018} proposed a probabilistic
solution to the problem of searching for unionable tables within massive data repositories. 
While these works focus on either finding \datasets that are joinable or union-able,
% we focus on finding tables that are both joinable and whose numerical columns are substantially correlated.
we focus on finding joinable tables that have highly correlated columns.

Several end-to-end dataset search systems have also been proposed. 
Data Civilizer~\cite{deng2017data} is a system that  uses a linkage graph to help identification of relevant data to a user task.
JUNEAU~\cite{zhang2020finding} formalized multiple data search tasks, table relatedness measures (e.g., column and row overlap, provenance, textual similarity), and proposed a linear combinations of these measures for providing data search within data science environments.
ARDA~\cite{chepurko2020arda} is a system that focuses on automatic data augmentation, i.e., how select the best features discovered from external dataset search systems.
We address an orthogonal problem: instead of building an end-to-end system, we focus on \textit{efficient} discovery and computation of a class of relationships that has been overlooked in prior works: correlation between numeric variables. Our techniques can be integrated as relationships in linkage graphs~\cite{deng2017data, fernandez@icde2018} or as a table relatedness measure~\cite{zhang2020finding} to improve search applications.

% JF removed to fit in page limit
%Also in the realm of data discovery is the work by Lehmberd et. al.~\cite{lehmberg@jws2014}.
%
%Given a user-provided \dataset, they propose an engine that searches an indexed corpus for additional data describing its entities. The discovered 
%data is then joined with the input table. 
%%, \ie the engine performs \textit{automatic table extensions} in order to better describe user-provided data. 
%%
%% In contrast, our work does not focus on better understanding the input data or on performing data integration 
%% automatically. Rather, we want to discover \datasets that are highly %%correlated with the input data. 
%
%Zhang et al.~\cite{zhang@sigir2018} recently formalized the problem of ad-hoc table retrieval using semantic similarity, and proposed multiple features and machine learning methods for addressing the problem. 
%% They focus on Wikipedia tables, significantly exploring their entities.
%While they are interested in retrieving tabular \datasets that contain entities from the same domain,  we focus on numerical relationships, identifying \datasets that might be highly correlated with an input query \dataset. 

\paragraph{Correlation Estimation}
Correlation measures have been studied extensively in the literature (Section~\ref{sec:prelim_corr_estimation}).
Most works in this area focus on statistical inference, i.e., given samples of a potentially infinite population, the goal is to infer properties by deriving estimates or performing statistical tests. Recent work has focused on reducing bias and estimation errors~\cite{bishara-hittner@epm2015}, hypothesis testing~\cite{bishara@psymethods2012, dewinter@psymethods2016}, or confidence intervals (CI) that work well under non-normal distributions~\cite{bishara@bjmsp2018, bishara@brm2017, hu@tas2020}. Differently from existing CIs,  we develop a CI for sub-samples based on concentration inequalities that makes no distributional assumptions. This setting is particularly attractive for sketching algorithms that perform a single pass over the data, allowing us to leverage statistics about data (e.g., the range of the values) which would otherwise be unknown.

\section{Conclusion}
\label{sec:conclusion}

In this paper, we studied a new class of data discovery queries,  \textit{join-correlation queries}, and proposed a new approach to support them %that is 
based on data sketches for join-correlation estimation. 
%In this paper, we studied join-correlation queries, a new class of data discovery queries for  finding correlated \datasets within large data collections. 
%
%To efficiently support these queries, we proposed \corrsketch, an approach that uses data synopses to estimate correlations between columns of 
%unjoined \datasets. 
%
% \corrsketch not only attains high accuracy but also supports different correlation measures.
%
We also developed confidence interval bounds for the accuracy of these correlation estimates, and used them to design  functions to rank the discovered %\dataset 
%correlated 
columns.
We experimentally evaluated both our sketches and scoring functions and showed that they are effective in a variety of \datasets and
different correlation measures. 

While we focused on correlations, our sketches are general and may be used to estimate other statistics.
Moreover, our work opens up a new avenue for the development of a new family of sketches that both join and estimate different data relationships between unjoined datasets.
For future work, we plan to explore different approaches to derive tighter confidence bounds, compare the performance and effectiveness sample selection strategies~\cite{dasgupta@icdt2016-theta-sketches, yang@icde2019}, and investigate a tighter integration between our method and others that focus on finding joinable \datasets (e.g., Lazo~\cite{fernandez@icde2018} and JOSIE~\cite{zhu@sigmod2019}). Finally, we would like to explore applications of our sketches for building end-to-end data search that takes into account the improvement in accuracy of machine learning models.

\vspace{0.5\baselineskip}
\myparagraph{Acknowledgments.}
This work was partially supported by the DARPA D3M program and NSF award OAC-1640864.
Any opinions, findings, and conclusions or recommendations expressed in this material are those of the authors and do not necessarily reflect the views of NSF and DARPA.

\vspace{0.6\baselineskip} % Add small vertical space to balance the columns

% \balance

\bibliographystyle{abbrv}
\bibliography{paper}

\begin{thebibliography}{10}

\bibitem{acharya@sigmod1999joinsynopses}
S.~Acharya, P.~B. Gibbons, V.~Poosala, and S.~Ramaswamy.
\newblock Join synopses for approximate query answering.
\newblock {\em SIGMOD Rec.}, 28(2):275–286, June 1999.

\bibitem{apache-lucene}
Apache lucene.
\newblock \url{https://lucene.apache.org/index.html}.

\bibitem{baak2020new}
M.~Baak, R.~Koopman, H.~Snoek, and S.~Klous.
\newblock A new correlation coefficient between categorical, ordinal and
  interval variables with pearson characteristics.
\newblock {\em Computational Statistics \& Data Analysis}, page 107043, 2020.

\bibitem{google-data-catalog}
S.~Bapat.
\newblock {Discover, understand and manage your data with Data Catalog, now
  GA}.
\newblock
  \url{https://cloud.google.com/blog/products/data-analytics/data-catalog-metadata-management-now-generally-available},
  2020.
\newblock [Online; accessed 22-June-2020].

\bibitem{bar-yossef@random2002}
Z.~Bar-Yossef, T.~S. Jayram, R.~Kumar, D.~Sivakumar, and L.~Trevisan.
\newblock Counting distinct elements in a data stream.
\newblock In J.~D.~P. Rolim and S.~Vadhan, editors, {\em Randomization and
  Approximation Techniques in Computer Science}, pages 1--10, Berlin,
  Heidelberg, 2002. Springer Berlin Heidelberg.

\bibitem{bardenet2015concentration}
R.~Bardenet, O.-A. Maillard, et~al.
\newblock Concentration inequalities for sampling without replacement.
\newblock {\em Bernoulli}, 21(3):1361--1385, 2015.

\bibitem{berry2000monte}
K.~J. Berry and P.~W. Mielke~Jr.
\newblock A monte carlo investigation of the fisher z transformation for normal
  and nonnormal distributions.
\newblock {\em Psychological Reports}, 87(3\_suppl):1101--1114, 2000.

\bibitem{beyer@cacm2009}
K.~Beyer, R.~Gemulla, P.~J. Haas, B.~Reinwald, and Y.~Sismanis.
\newblock Distinct-value synopses for multiset operations.
\newblock {\em Commun. ACM}, 52(10):87--95, Oct. 2009.

\bibitem{beyer@sigmod2007}
K.~Beyer, P.~J. Haas, B.~Reinwald, Y.~Sismanis, and R.~Gemulla.
\newblock On synopses for distinct-value estimation under multiset operations.
\newblock In {\em Proceedings of the 2007 ACM SIGMOD International Conference
  on Management of Data}, SIGMOD '07, pages 199--210, New York, NY, USA, 2007.
  ACM.

\bibitem{bishara@psymethods2012}
A.~J. Bishara and J.~B. Hittner.
\newblock {Testing the significance of a correlation with nonnormal data:
  Comparison of Pearson, Spearman, transformation, and resampling approaches}.
\newblock {\em Psychological Methods}, 2012.

\bibitem{bishara-hittner@epm2015}
A.~J. Bishara and J.~B. Hittner.
\newblock Reducing bias and error in the correlation coefficient due to
  nonnormality.
\newblock {\em Educational and Psychological Measurement}, 75(5):785--804,
  2015.

\bibitem{bishara@brm2017}
A.~J. Bishara and J.~B. Hittner.
\newblock Confidence intervals for correlations when data are not normal.
\newblock {\em Behavior Research Methods}, 49(1):294--309, 2017.

\bibitem{bishara@bjmsp2018}
A.~J. Bishara, J.~Li, and T.~Nash.
\newblock Asymptotic confidence intervals for the pearson correlation via
  skewness and kurtosis.
\newblock {\em British Journal of Mathematical and Statistical Psychology},
  71(1):167--185, 2018.

\bibitem{bliss1967statistics}
C.~I. Bliss et~al.
\newblock Statistics in biology. statistical methods for research in the
  natural sciences.
\newblock {\em Statistics in biology. Statistical methods for research in the
  natural sciences.}, 1967.

\bibitem{bonett@psychometrika2000}
D.~Bonett and T.~A. Wright.
\newblock Sample size requirements for estimating pearson, kendall and spearman
  correlations.
\newblock {\em Psychometrika}, 65(1):23--28, 2000.

\bibitem{bowley1928standard}
A.~Bowley.
\newblock The standard deviation of the correlation coefficient.
\newblock {\em Journal of the American Statistical Association},
  23(161):31--34, 1928.

\bibitem{noy@www2019}
D.~Brickley, M.~Burgess, and N.~Noy.
\newblock Google dataset search: Building a search engine for datasets in an
  open web ecosystem.
\newblock In {\em The World Wide Web Conference}, WWW '19, pages 1365--1375,
  New York, NY, USA, 2019. ACM.

\bibitem{brown@datamgmt}
P.~Brown, P.~J. Haas, J.~Myllymaki, H.~Pirahesh, B.~Reinwald, and Y.~Sismanis.
\newblock Toward automated large-scale information integration and discovery.
\newblock In {\em Data Management in a Connected World}, volume 3551 of {\em
  Lecture Notes in Computer Science}, pages 161--180. Springer, 2005.

\bibitem{cafarella@vldb2008}
M.~J. Cafarella, A.~Halevy, D.~Z. Wang, E.~Wu, and Y.~Zhang.
\newblock Webtables: exploring the power of tables on the web.
\newblock {\em Proceedings of the VLDB Endowment}, 1(1):538--549, 2008.

\bibitem{fernandez@icde2019}
R.~{Castro Fernandez}, J.~{Min}, D.~{Nava}, and S.~{Madden}.
\newblock Lazo: A cardinality-based method for coupled estimation of jaccard
  similarity and containment.
\newblock In {\em 2019 IEEE 35th International Conference on Data Engineering
  (ICDE)}, pages 1190--1201, April 2019.

\bibitem{charikar@pods2000error-guarantees}
M.~Charikar, S.~Chaudhuri, R.~Motwani, and V.~Narasayya.
\newblock Towards estimation error guarantees for distinct values.
\newblock In {\em Proceedings of the Nineteenth ACM SIGMOD-SIGACT-SIGART
  Symposium on Principles of Database Systems}, PODS ’00, page 268–279, New
  York, NY, USA, 2000. Association for Computing Machinery.

\bibitem{chen@sigmod17}
Y.~Chen and K.~Yi.
\newblock Two-level sampling for join size estimation.
\newblock In {\em Proceedings of the 2017 ACM International Conference on
  Management of Data}, SIGMOD ’17, page 759–774, New York, NY, USA, 2017.
  Association for Computing Machinery.

\bibitem{chepurko2020arda}
N.~Chepurko, R.~Marcus, E.~Zgraggen, R.~C. Fernandez, T.~Kraska, and D.~Karger.
\newblock Arda: Automatic relational data augmentation for machine learning.
\newblock {\em Proceedings of the VLDB Endowment}, 13(9), 2020.

\bibitem{cohen@vldb2008}
E.~Cohen and H.~Kaplan.
\newblock Tighter estimation using bottom k sketches.
\newblock {\em Proc. VLDB Endow.}, 1(1):213--224, Aug. 2008.

\bibitem{cohen@kddd2017-minimal-variance}
R.~Cohen, L.~Katzir, and A.~Yehezkel.
\newblock A minimal variance estimator for the cardinality of big data set
  intersection.
\newblock In {\em Proceedings of the 23rd ACM SIGKDD International Conference
  on Knowledge Discovery and Data Mining}, KDD ’17, page 95–103, New York,
  NY, USA, 2017. Association for Computing Machinery.

\bibitem{cormod@foundtrends2012}
G.~Cormode, M.~N. Garofalakis, P.~J. Haas, and C.~Jermaine.
\newblock Synopses for massive data: Samples, histograms, wavelets, sketches.
\newblock {\em Foundations and Trends in Databases}, 4(1-3):1--294, 2012.

\bibitem{soren@neurips2017}
S.~Dahlgaard, M.~B.~T. Knudsen, and M.~Thorup.
\newblock Practical hash functions for similarity estimation and dimensionality
  reduction.
\newblock In {\em Proceedings of the 31st International Conference on Neural
  Information Processing Systems}, NIPS’17, page 6618–6628, Red Hook, NY,
  USA, 2017. Curran Associates Inc.

\bibitem{dasgupta@icdt2016-theta-sketches}
A.~Dasgupta, K.~J. Lang, L.~Rhodes, and J.~Thaler.
\newblock {A Framework for Estimating Stream Expression Cardinalities}.
\newblock In W.~Martens and T.~Zeume, editors, {\em 19th International
  Conference on Database Theory (ICDT 2016)}, volume~48 of {\em Leibniz
  International Proceedings in Informatics (LIPIcs)}, pages 6:1--6:17,
  Dagstuhl, Germany, 2016. Schloss Dagstuhl--Leibniz-Zentrum fuer Informatik.

\bibitem{dasgupta@icdt2016}
A.~Dasgupta, K.~J. Lang, L.~Rhodes, and J.~Thaler.
\newblock {A Framework for Estimating Stream Expression Cardinalities}.
\newblock In W.~Martens and T.~Zeume, editors, {\em 19th International
  Conference on Database Theory (ICDT 2016)}, volume~48 of {\em Leibniz
  International Proceedings in Informatics (LIPIcs)}, pages 6:1--6:17,
  Dagstuhl, Germany, 2016. Schloss Dagstuhl--Leibniz-Zentrum fuer Informatik.

\bibitem{dewinter@psymethods2016}
J.~C. de~Winter, S.~D. Gosling, and J.~Potter.
\newblock {Comparing the pearson and spearman correlation coefficients across
  distributions and sample sizes: A tutorial using simulations and empirical
  data}.
\newblock {\em Psychological Methods}, 2016.

\bibitem{deng2017data}
D.~Deng, R.~C. Fernandez, Z.~Abedjan, S.~Wang, M.~Stonebraker, A.~K.
  Elmagarmid, I.~F. Ilyas, S.~Madden, M.~Ouzzani, and N.~Tang.
\newblock The data civilizer system.
\newblock In {\em Cidr}, 2017.

\bibitem{devlin@biometrika1975}
S.~J. Devlin, R.~Gnanadesikan, and J.~R. Kettenring.
\newblock {Robust estimation and outlier detection with correlation
  coefficients}.
\newblock {\em Biometrika}, 62(3):531--545, 12 1975.

\bibitem{duffield@jacm2007}
N.~Duffield, C.~Lund, and M.~Thorup.
\newblock Priority sampling for estimation of arbitrary subset sums.
\newblock {\em J. ACM}, 54(6):32–es, Dec. 2007.

\bibitem{efron1994introduction}
B.~Efron and R.~Tibshirani.
\newblock {\em An Introduction to the Bootstrap}.
\newblock Chapman \& Hall/CRC Monographs on Statistics \& Applied Probability.
  Taylor \& Francis, 1994.

\bibitem{estan@icde2006endbiased}
C.~{Estan} and J.~F. {Naughton}.
\newblock End-biased samples for join cardinality estimation.
\newblock In {\em 22nd International Conference on Data Engineering (ICDE'06)},
  pages 20--20, 2006.

\bibitem{fernandez@icde2018}
R.~C. Fernandez, Z.~Abedjan, F.~Koko, G.~Yuan, S.~Madden, and M.~Stonebraker.
\newblock {Aurum: A Data Discovery System}.
\newblock In {\em ICDE '18}, pages 1001--1012, 2018.

\bibitem{flajolet2007hyperloglog}
P.~Flajolet, {\'E}.~Fusy, O.~Gandouet, and F.~Meunier.
\newblock {HyperLogLog: the analysis of a near-optimal cardinality estimation
  algorithm}.
\newblock In P.~Jacquet, editor, {\em {AofA: Analysis of Algorithms}}, volume
  DMTCS Proceedings vol. AH, 2007 Conference on Analysis of Algorithms (AofA
  07) of {\em DMTCS Proceedings}, pages 137--156, Juan les Pins, France, June
  2007. {Discrete Mathematics and Theoretical Computer Science}.

\bibitem{ganguly@sigmodrec1996}
S.~Ganguly, P.~B. Gibbons, Y.~Matias, and A.~Silberschatz.
\newblock Bifocal sampling for skew-resistant join size estimation.
\newblock {\em SIGMOD Rec.}, 25(2):271–281, June 1996.

\bibitem{lyft-amundsen}
M.~Grover.
\newblock {Amundsen — Lyft's data discovery {\&} metadata engine}.
\newblock
  \url{https://eng.lyft.com/amundsen-lyfts-data-discovery-metadata-engine-62d27254fbb9},
  2019.
\newblock [Online; accessed 20-October-2019].

\bibitem{data-catalog-market}
M.~Grover.
\newblock {Data Catalog Market | Size {\&} Growth Report, 2020-2027}.
\newblock
  \url{https://www.reportsanddata.com/report-detail/data-catalog-market}, 2020.
\newblock [Online; accessed 28-March-2021].

\bibitem{haas@vldb95}
P.~J. Haas, J.~F. Naughton, S.~Seshadri, and L.~Stokes.
\newblock Sampling-based estimation of the number of distinct values of an
  attribute.
\newblock In {\em Proceedings of the 21th International Conference on Very
  Large Data Bases}, VLDB ’95, page 311–322, San Francisco, CA, USA, 1995.
  Morgan Kaufmann Publishers Inc.

\bibitem{harmouch@vldb2017cardest-survey}
H.~Harmouch and F.~Naumann.
\newblock Cardinality estimation: An experimental survey.
\newblock {\em Proc. VLDB Endow.}, 11(4):499–512, Dec. 2017.

\bibitem{hoeffding}
W.~Hoeffding.
\newblock Probability inequalities for sums of bounded random variables.
\newblock {\em Journal of the American Statistical Association},
  58(301):13--30, 1963.

\bibitem{hu@tas2020}
X.~Hu, A.~Jung, and G.~Qin.
\newblock Interval estimation for the correlation coefficient.
\newblock {\em The American Statistician}, 74(1):29--36, 2020.

\bibitem{huang2019joins}
D.~Huang, D.~Y. Yoon, S.~Pettie, and B.~Mozafari.
\newblock Joins on samples: A theoretical guide for practitioners.
\newblock {\em arXiv preprint arXiv:1912.03443}, 2019.

\bibitem{ioannidis@vldb2003}
Y.~E. Ioannidis.
\newblock The history of histograms (abridged).
\newblock In {\em VLDB}, pages 19--30, 2003.

\bibitem{jarvelin2002cumulated}
K.~J{\"a}rvelin and J.~Kek{\"a}l{\"a}inen.
\newblock Cumulated gain-based evaluation of ir techniques.
\newblock {\em ACM Transactions on Information Systems (TOIS)}, 20(4):422--446,
  2002.

\bibitem{kipf@cidr2019}
A.~Kipf, T.~Kipf, B.~Radke, V.~Leis, P.~A. Boncz, and A.~Kemper.
\newblock Learned cardinalities: Estimating correlated joins with deep
  learning.
\newblock In {\em {CIDR} 2019, 9th Biennial Conference on Innovative Data
  Systems Research, Asilomar, CA, USA, January 13-16, 2019, Online
  Proceedings}. www.cidrdb.org, 2019.

\bibitem{knuth1997art}
D.~Knuth, Addison-Wesley, and P.~Education.
\newblock {\em The Art of Computer Programming}.
\newblock Number v. 3 in Addison-Wesley series in computer science and
  information processing. Addison-Wesley, 1997.

\bibitem{linkedin-datahub}
M.~Lan.
\newblock {DataHub: A generalized metadata search {\&} discovery tool}.
\newblock \url{https://engineering.linkedin.com/blog/2019/data-hub}, 2019.
\newblock [Online; accessed 22-June-2020].

\bibitem{lehmberg@www2016}
O.~Lehmberg, D.~Ritze, R.~Meusel, and C.~Bizer.
\newblock A large public corpus of web tables containing time and context
  metadata.
\newblock In {\em Proceedings of the 25th International Conference Companion on
  World Wide Web}, pages 75--76, 2016.

\bibitem{lehmberg@jws2014}
O.~Lehmberg, D.~Ritze, P.~Ristoski, R.~Meusel, H.~Paulheim, and C.~Bizer.
\newblock The mannheim search join engine.
\newblock {\em Journal of Web Semantics}, 35:159 -- 166, 2015.

\bibitem{leis@cidr2017}
V.~Leis, B.~Radke, A.~Gubichev, A.~Kemper, and T.~Neumann.
\newblock Cardinality estimation done right: Index-based join sampling.
\newblock In {\em {CIDR} 2017, 8th Biennial Conference on Innovative Data
  Systems Research, Chaminade, CA, USA, January 8-11, 2017, Online
  Proceedings}. www.cidrdb.org, 2017.

\bibitem{lipton@sigmodrec1990}
R.~J. Lipton, J.~F. Naughton, and D.~A. Schneider.
\newblock Practical selectivity estimation through adaptive sampling.
\newblock {\em SIGMOD Rec.}, 19(2):1–11, May 1990.

\bibitem{micceri1989unicorn}
T.~Micceri.
\newblock The unicorn, the normal curve, and other improbable creatures.
\newblock {\em Psychological bulletin}, 105(1):156, 1989.

\bibitem{nargesian@vldb2018}
F.~Nargesian, E.~Zhu, K.~Q. Pu, and R.~J. Miller.
\newblock Table union search on open data.
\newblock {\em Proceedings of the VLDB Endowment}, 11(7):813--825, 2018.

\bibitem{vision-zero}
Nyc vision zero initiative.
\newblock \url{http://www1.nyc.gov/site/visionzero/index.page}.

\bibitem{nycopendata}
{NYC OpenData}.
\newblock \url{https://opendata.cityofnewyork.us}.

\bibitem{usopendata}
{United States Government Open Data}.
\newblock \url{https://www.data.gov}.

\bibitem{padmanabhan@sigmod2003}
S.~Padmanabhan, B.~Bhattacharjee, T.~Malkemus, L.~Cranston, and M.~Huras.
\newblock Multi-dimensional clustering: {A} new data layout scheme in {DB2}.
\newblock In {\em SIGMOD}, pages 637--641, 2003.

\bibitem{CRR73}
C.~R. Rao.
\newblock {\em Linear Statistical Inference and Its Applications}.
\newblock Wiley, New York, 1973.

\bibitem{rodgers@tas1988}
J.~L. Rodgers and W.~A. Nicewander.
\newblock Thirteen ways to look at the correlation coefficient.
\newblock {\em The American Statistician}, 42(1):59--66, 1988.

\bibitem{rusu@tdbs2008}
F.~Rusu and A.~Dobra.
\newblock Sketches for size of join estimation.
\newblock {\em ACM Trans. Database Syst.}, 33(3), Sept. 2008.

\bibitem{shevlyakov@robust-corr-book}
G.~L. Shevlyakov and H.~Oja.
\newblock {\em Robust correlation: Theory and applications}, volume~3.
\newblock John Wiley \& Sons, 2016.

\bibitem{shieh@brm2010}
G.~Shieh.
\newblock Estimation of the simple correlation coefficient.
\newblock {\em Behavior Research Methods}, 42(4):906--917, 2010.

\bibitem{socrata}
{The Socrata Open Data API}.
\newblock \url{https://dev.socrata.com}.

\bibitem{szekely@tas2007}
G.~J. Székely, M.~L. Rizzo, and N.~K. Bakirov.
\newblock Measuring and testing dependence by correlation of distances.
\newblock {\em Ann. Statist.}, 35(6):2769--2794, 12 2007.

\bibitem{tablesaw}
{The Tablesaw Library}.
\newblock \url{https://github.com/jtablesaw/tablesaw}.

\bibitem{ting2016towards}
D.~Ting.
\newblock Towards optimal cardinality estimation of unions and intersections
  with sketches.
\newblock In {\em Proceedings of the 22nd ACM SIGKDD International Conference
  on Knowledge Discovery and Data Mining}, pages 1195--1204, 2016.

\bibitem{venetis2012crsi}
P.~Venetis, Y.~Sismanis, and B.~Reinwald.
\newblock Crsi: a compact randomized similarity index for set-valued features.
\newblock In {\em Proceedings of the 15th International Conference on Extending
  Database Technology}, pages 384--395, 2012.

\bibitem{vengerov@vldb2015}
D.~Vengerov, A.~C. Menck, M.~Zait, and S.~P. Chakkappen.
\newblock Join size estimation subject to filter conditions.
\newblock {\em Proc. VLDB Endow.}, 8(12):1530–1541, Aug. 2015.

\bibitem{wilcox1996confidence}
R.~R. Wilcox.
\newblock Confidence intervals for the slope of a regression line when the
  error term has nonconstant variance.
\newblock {\em Computational Statistics \& Data Analysis}, 22(1):89--98, 1996.

\bibitem{airbnb-dataportal}
C.~C. Williams.
\newblock {Democratizing Data at Airbnb}.
\newblock
  \url{https://medium.com/airbnb-engineering/democratizing-data-at-airbnb-852d76c51770},
  2017.
\newblock [Online; accessed 22-June-2020].

\bibitem{wbopendata}
{World Bank Open Data}.
\newblock \url{https://data.worldbank.org}.

\bibitem{wbopendata-finance}
{World Bank Group Finances}.
\newblock \url{https://finances.worldbank.org}.

\bibitem{xiao2011efficient}
C.~Xiao, W.~Wang, X.~Lin, J.~X. Yu, and G.~Wang.
\newblock Efficient similarity joins for near-duplicate detection.
\newblock {\em ACM Transactions on Database Systems (TODS)}, 36(3):1--41, 2011.

\bibitem{yang@icde2019}
Y.~{Yang}, Y.~{Zhang}, W.~{Zhang}, and Z.~{Huang}.
\newblock Gb-kmv: An augmented kmv sketch for approximate containment
  similarity search.
\newblock In {\em 2019 IEEE 35th International Conference on Data Engineering
  (ICDE)}, pages 458--469, April 2019.

\bibitem{yang-deep-card-est@vldb2019}
Z.~Yang, E.~Liang, A.~Kamsetty, C.~Wu, Y.~Duan, X.~Chen, P.~Abbeel, J.~M.
  Hellerstein, S.~Krishnan, and I.~Stoica.
\newblock Deep unsupervised cardinality estimation.
\newblock {\em Proc. VLDB Endow.}, 13(3):279–292, Nov. 2019.

\bibitem{yuan@jmva2000}
K.-H. Yuan and P.~M. Bentler.
\newblock Inferences on correlation coefficients in some classes of nonnormal
  distributions.
\newblock {\em Journal of Multivariate Analysis}, 72(2):230 -- 248, 2000.

\bibitem{yuan2005effect}
K.-H. Yuan, P.~M. Bentler, and W.~Zhang.
\newblock The effect of skewness and kurtosis on mean and covariance structure
  analysis: The univariate case and its multivariate implication.
\newblock {\em Sociological Methods \& Research}, 34(2):240--258, 2005.

\bibitem{zhang@sigir2018}
S.~Zhang and K.~Balog.
\newblock Ad hoc table retrieval using semantic similarity.
\newblock In {\em Proceedings of the 2018 World Wide Web Conference}, WWW '18,
  pages 1553--1562, Republic and Canton of Geneva, Switzerland, 2018.
  International World Wide Web Conferences Steering Committee.

\bibitem{zhang2020web}
S.~Zhang and K.~Balog.
\newblock Web table extraction, retrieval, and augmentation: A survey.
\newblock {\em ACM Transactions on Intelligent Systems and Technology (TIST)},
  11(2):1--35, 2020.

\bibitem{zhang2020finding}
Y.~Zhang and Z.~G. Ives.
\newblock Finding related tables in data lakes for interactive data science.
\newblock In {\em Proceedings of the 2020 ACM SIGMOD International Conference
  on Management of Data}, pages 1951--1966, 2020.

\bibitem{zhu@sigmod2019}
E.~Zhu, D.~Deng, F.~Nargesian, and R.~J. Miller.
\newblock Josie: Overlap set similarity search for finding joinable tables in
  data lakes.
\newblock In {\em Proceedings of the 2019 International Conference on
  Management of Data}, SIGMOD '19, pages 847--864, New York, NY, USA, 2019.
  ACM.

\bibitem{zhu@vldb2016-lsh-ensemble}
E.~Zhu, F.~Nargesian, K.~Q. Pu, and R.~J. Miller.
\newblock Lsh ensemble: Internet-scale domain search.
\newblock {\em Proc. VLDB Endow.}, 9(12):1185–1196, Aug. 2016.

\end{thebibliography}

\balance

\appendix

\section{Appendix}

\subsection{Proof of Theorem \ref{thm:uniform}}

\begin{proof}[Proof of Theorem~\ref{thm:uniform}]
Let $\mathcal{T}_{X \bowtie Y} = \langle K_{X \bowtie Y}, X_{X \bowtie Y}, 
Y_{X \bowtie Y} \rangle$  be
the table resulting of the join between 
$\mathcal{T}_X = \langle K_X, X \rangle$ and $\mathcal{T}_Y = \langle K_Y, Y \rangle$. 
By definition, $K_{X \bowtie Y} = K_X \cap K_Y$. 
Let $g = h_u(h(k))$ be the composition of the hash functions $h,h_u$ described above;  $g$ maps keys from the set $K_X \cap K_Y$ uniformly at random to $[0,1]$.
% For any key column $K$, we assume upfront, without loss of generality and for simplicity of notation, 
% that the keys $k \in K$ are integers. If they are not, we can use a collision-free hash  
% function $h$ to hash them into integers.
% %
% Moreover, recall that $h_u$ is a random and uniform hash function that maps integer keys onto 
%  interval $[0,1]$. 
 %
For this proof, let $g(K) = \{ g(k) : k \in  K\}$,  
%, $h(K) = \{ h(k) : k \in  K\}$
$S_{X \bowtie Y}$ be the set of tuples $\{ \langle k, x_k, y_k\rangle : k \in K_{X \bowtie Y}\}$ with the 
$n$ smallest values $g(k) \in g(K_{X \bowtie Y})$, and $n < |K_{X \bowtie Y}|$. 
Notice that, because $g$ assigns values uniformly and randomly, the set of tuples
$\langle x_k, y_k \rangle \in S_{X \bowtie Y}$ is a uniform random sample of the set of tuples 
$\langle x_k, y_k \rangle \in\mathcal{T}_{X \bowtie Y}$.

%\hide{our assumption that $g$ is a random and uniform hash function, and that $\mathcal{T}_{X \bowtie Y} = \langle K_{X \bowtie Y}, X_{X \bowtie Y}, Y_{X \bowtie Y} \rangle$  is the table resulting of the join between $\mathcal{T}_X = \langle K_X, X \rangle$ and $\mathcal{T}_Y = \langle K_Y, Y \rangle$.  By definition,  . Let $S_{X \bowtie Y}$ be the set of tuples $\{ \langle h(k), x_k, y_k\rangle : k \in K_{X \bowtie Y}\}$ with the $n$ smallest values $g(k) \in g(K_{X \bowtie Y})$ and $n < |K_{X \bowtie Y}|$. Notice that, because $g$ assigns values uniformly and randomly, the set of tuples $\langle x_k, y_k \rangle \in S_{X \bowtie Y}$ is a uniform random sample of the set of tuples $\langle x_k, y_k \rangle \in\mathcal{T}_{X \bowtie Y}$. }
%
% \hide{ More formally, .The synopsis table $L_{X \bowtie Y }$ consists of all triples $\langle h(k), x_k, y_k \rangle$ where the hashed keys  $h(k_x) \in L_{ \langle K_X, X \rangle }$ are equal to $h(k_y) \in L_{\langle K_Y, Y \rangle}$, and $x_k$ and $y_k$ are, respectively, the numeric values from $L_{\langle K_X,X \rangle}$ and $L_{\langle K_Y,Y \rangle}$  associated with $k$. More formally, }

Now consider the size-$n$ synopses $L_{\langle K_X,X \rangle}$ and $L_{\langle K_Y,Y \rangle}$ of 
tables $\mathcal{T}_{X}$ and $\mathcal{T}_{Y}$ respectively.
Let $L_{K_X}$ and $L_{K_Y}$ be the sets of keys from their respective synopses, 
i.e., $L_{K_X} =  \{k_x :  k_x \in L_{\langle K_X, X \rangle}\}$
and $L_{K_Y} =  \{k_y: k_y \in L_{\langle K_Y, Y \rangle}\}$.
Moreover, let $L_{X \bowtie Y } = \{ \langle k, x_k, y_k \rangle :  k \in L_{K_X} \cap L_{K_Y}\}$.
Because a synopsis $L$ always keeps the numerical values associated with their respective keys, to prove that $\langle x_k, y_k \rangle \in L_{X \bowtie Y}$ is a uniform random sample of the set of tuples $\langle x_k, y_k \rangle \in\mathcal{T}_{X \bowtie Y}$, it suffices to show that the set of keys
$\{ k:  k \in L_{X \bowtie Y} \}$ is a uniform random sample of $K_{X \bowtie Y}$.

By definition, the set of keys $L_{K_X}  \in L_{\langle K_X,X \rangle}$ 
(resp. $L_{K_Y}  \in L_{\langle K_Y,Y \rangle}$) only contains the $n$ keys $k \in K_X$ 
(resp. $k \in K_Y$) with the smallest values of $g(K_X)$ (resp. $g(K_Y)$), 
and the joined synopsis table $L_{X \bowtie Y }$ contains their intersection: 
$L_{K_X} \cap L_{K_Y}$. Thus, it is easy to see that 
$L_{K_X} \cap L_{K_Y} \subseteq \{ k : k \in S_{X \bowtie Y} \}$ always holds.
%
% W.l.o.g
Without loss of generality, assume that $|L_{\langle K_X,X \rangle} | = | L_{\langle K_Y,Y \rangle}| = |S_{X \bowtie Y}| = n$.
The best case happens when the sets of keys are equal, i.e., $L_{K_X} = L_{K_Y}$, in which case $| L_{K_X} \cap L_{K_Y} | = |S_{X \bowtie Y}| $ and  
$ L_{K_X} \cap L_{K_Y} =  \{ k : k \in S_{X \bowtie Y} \}$. 
When  $L_{K_X} \neq L_{K_Y}$, then $| L_{K_X} \cap L_{K_Y} | < |S_{X \bowtie Y}| $. 
Now, assume that $| L_{K_X} \cap L_{K_Y} | = 1 $. 
Then, the single key $k \in L_{K_X} \cap L_{K_Y}$ has the smallest value of $g(k)$.
The sample $S_{X \bowtie Y}$ of size $1$ also contains the same key $k \in L_{K_X} \cap L_{K_Y}$.
More generally, if $| L_{K_X} \cap L_{K_Y} | = m $,  then the set of keys of the sample 
$S_{X \bowtie Y}$ of size $m$ is equal to the set $L_{K_X} \cap L_{K_Y}$.
Therefore, the set of tuples $\{ \langle x_k, y_k \rangle : k \in L_{K_X} \cap L_{K_Y} \}$ induced by $L_{X \bowtie Y}$ is also a uniform random sample of $\mathcal{T}_{X \bowtie Y}$.
\end{proof}

\end{document}